\documentclass[11pt]{article}
\textheight22cm \textwidth16cm \hoffset-2cm \voffset-1.3cm
\parskip 2pt plus1pt minus1pt
\usepackage{amsmath,amsthm,amssymb}
\usepackage{mathrsfs}
\usepackage{array, longtable}
\usepackage{pdfsync}
\usepackage{xcolor}
\newtheorem{Theorem}{Theorem}[section]
\newtheorem{lem}[Theorem]{Lemma}
\newtheorem{Remark}[Theorem]{Remark}
\newtheorem{Definition}[Theorem]{Definition}
\newtheorem{Corollary}[Theorem]{Corollary}

\newtheorem{Example}[Theorem]{Example}

\numberwithin{equation}{section}
\newcommand{\dst}[2]{\genfrac{[}{]}{0pt}{0}{#1}{#2}}
\begin{document}

\title {\bf$t$-Fold $s$-Blocking  Sets and $s$-Minimal Codes
}

\author{Hao Chen$^1$ Xu Pan$^2$ and Conghui Xie$^3$
}
\date{\small
$^1$College of Information Science and Technology,
Jinan University, Guangzhou, Guangdong, 510632, China, haochen@jnu.edu.cn\\
$^2$School of Mathematics and Statistics, Central China Normal University, Wuhan, Hubei, 430079, China, panxucode@163.com\\
$^3$Hetao Institute of Mathematics and Interdisciplinary Sciences,  Shenzhen, Guangdong, 518033, China,
  xiech@himis-sz.cn
}

\maketitle

\begin{abstract}
Blocking sets and minimal codes have been studied for many years in projective geometry and coding theory. In this paper, we provide a new lower bound on the size of $t$-fold $s$-blocking sets without the condition $t \leq q$, which is stronger than the classical result of Beutelspacher in 1983. Then a lower bound on lengths of projective $s$-minimal codes is also obtained. It is proved that $(s+1)$-minimal codes are certainly $s$-minimal codes. We generalize the Ashikhmin-Barg condition for minimal
codes to  $s$-minimal codes. Many infinite families of $s$-minimal codes satisfying and violating this generalized Ashikhmin-Barg condition are constructed. We also give several examples which are binary minimal codes, but not $2$-minimal codes.\\

\medskip
\textbf{Key Words:}   $t$-fold $s$-blocking set, cutting $s$-blocking set, minimal subcode, s-minimal code.

\end{abstract}

\section{Introduction}
Throughout this paper, let $\mathbb{F}_q$ be the finite field of order $q$, where $q$ is a power of a prime. For a positive integer $n$, let $\mathbb{F}_q^{n}$ be the $n$-dimensional vector space over $\mathbb{F}_q$ consisting of all the $n$-tuples $(x_{1},x_{2},\cdots,x_{n}),$ where $x_i\in \mathbb{F}_q$ for all $1\leq i\leq n.$
For a positive integer $n$, assume $[n]$ is the set of positive integers such that $[n]=\{1,2,3,\cdots,n\}$.

A nonempty subset $X$ of $\mathbb{F}_q^{n}$ is called an $(n,|X|)_q$-{\it code}, where $|X|$ is the size of $X$.
A subspace $C$ of $\mathbb{F}_q^{n}$ with dimension $k$ is called an {\it $[n,k]_q$-linear code} or an {\it $[n,k,d]_q$-linear code}, where $d$ is the minimum Hamming distance of $C$. The minimum Hamming distance of a code is the minimum Hamming distance between any two distinct codewords in this code.
The dual code of a linear code $C$ is defined as
$$
C^{\perp}=\{{\bf x}\in \mathbb{F}_q^n \,| \,  \langle{{\bf x},\bf c}\rangle =0 \text{ for any } {\bf c}\in C\},$$
where $\langle{{\bf x},\bf c}\rangle$ is the Euclidean inner product between ${\bf x}$ and ${\bf c}.$
An linear code $C$ is called  {\it projective}, if
the minimum Hamming distance $d(C^\bot)$ of the dual code $C^\bot$ satisfies $d(C^\bot)\ge 3.$


A nonzero codeword ${\bf c}$ of a  $[n,k]_q$-linear code $C$ is called minimal,
if for any codeword ${\bf c}'$ satisfying $\mathrm{supp}({\bf c}') \subseteq \mathrm{supp}({\bf c})$,
then there is a nonzero $\lambda \in \mathbb{F}_q$, such that, ${\bf c}'=\lambda {\bf c}$.
A linear code $C$ is called minimal if each nonzero codeword is minimal.
The Ashikhmin-Barg criterion $\frac{w_{min}}{w_{max}} > \frac{q-1}{q}$ was proposed in \cite{AB} as a sufficient condition
for a $q$-ary linear code to be minimal, where $w_{min}$ and $w_{max}$ are minimum and maximum weights of the code. In \cite{AB}, minimal codewords and minimal codes were introduced for general $q$-ary linear codes and studied systematically.
Minimal codewords in linear codes were originally used
in  decoding algorithms \cite{H79}. Minimal binary linear codes were studied in \cite{CL} in the name linear interesting codes, also see \cite{Borello}.
 Massey \cite{MA} showed that minimal codewords in a linear code determine the access structure in his
code-based secret sharing scheme.  Brassard, Cr\'{e}peau and Santha used binary minimal linear  codes for constructing oblivious transfers of vectors, see \cite{Brassard}.  There are many constructions of minimal codes violating the Ashikhmin-Barg criterion, see \cite{DHZ,Chen} and reference therein. We refer to \cite{CMP,CMR} for variations of minimal codes.

Blocking sets in projective spaces were have been studied for many years, see \cite{Beu,HN}.
A subset $S$ in the $k-1$ dimension projective space $\mathrm{PG}(k-1,q)$ over $\mathbb{F}_q$ is called a cutting $s$-blocking set,
if for each projective subspace $U \subset\mathrm{PG}(k-1,q)$ of codimension $s$, the intersection $S \bigcap U$ span the space $U$. When $s=1$, a cutting $1$-blocking set is also called strong blocking set, see \cite{HN,TQ,AB20,AB22}.
The following geometric characterization of minimal codes was given in \cite{AB20,TQ,HN}.
Let $G=[G^T_{1},G^T_{2},\cdots,G^T_{n}]$ be the generator matrix of an  $[n,k]_q$-linear code $C$,
where $G_1, G_2, \cdots, G_n$ are nonzero vectors in $\mathbb{F}_q^k$.
Let $B$ be the subset of $n$ points in $\mathrm{PG}(k-1,q)$ corresponding to $G_1, G_2, \cdots, G_n$.
Then $C$ is a minimal code if and only if $B$ is a strong blocking set. In \cite{Alfarano2}, this geometric characterization of minimal codes was generalized to minimal codes in the rank matrix.

From the classical results by Beutelspacher in \cite{Beu}, a lower bound on the size of cutting $t$-blocking set can be given, in which a condition on $t$ depending $q$ had to be imposed, see \cite[Theorem 4]{Beu} or \cite[Theorem 1.12]{AB22}.
Then by the using of combinatorial Nullstellensatz, a lower bound on the size of strong cutting set was given in \cite[Theorem 2.14]{AB22}. Let $m(k,q)$ be the length $n$ such that there exists a minimal $[n, k]_q$-linear code.
Their lower bound is $$m(k,q)\ge (k-1)(q+1).$$
An upper bound for $m(k,q)$
$$m(k,q)\leq (q+1)   \Big{\lceil}  \frac{2}{1+\frac{1}{(q+1)^2 \ln q}}(k-1) \Big{\rceil},$$ was proved by a probabilistic argument, see \cite{HN}.
Some new upper bounds on the smallest size of affine blocking sets were proved in \cite{Bi}.



In 1977, the notion of the generalized Hamming weights was introduced  by Helleseth, Kl{\o}ve and Mykkeltveit in \cite{HT}. Wei \cite{W} provided an application of the generalized Hamming weights in wire-tap channels of type II.
Since then, lots of works have been done in computing and describing the generalized Hamming weights for many classes of linear codes, see \cite{TV,HK95,HP1}. In a recent paper \cite{GS}, subcodes with minimal supports were used to consider pure resolutions and Betti numbers of linear codes. A $s$-dimension subcode $D \subset C$ is called minimal, if there is a $s$-dimension subcode $D'$ such that $\mathrm{supp}(D') \subset \mathrm{supp}(D)$, then $D'=D$. It is clear that this is a generalization of minimal codewords in the case $s=1$. A linear code is called $s$-minimal, if each of its $s$-dimension subcode is minimal. From the view of coding theory, it is natural to study $s$-minimal codes and cutting $s$-block sets. $s$-minimal codes have been studied in \cite{XKH}.



Alon, Bishnoi, Das and Neri constructed strong blocking sets of size at most $Cqk$ in the projective space $PG(k-1,q)$ in their celebrated paper \cite{Alon}, where $C$ is an absolute constant. This is equivalent to the explicit constructions of minimal codes of length at most $Cqk$. Their result was generalized to an explicit construction of strong $s$-blocking sets (cutting $s$-blocking sets in this paper) of size at most $2^{O(s^2logs)}q^sk$ in \cite{Bis}. This is also equivalent to the explicit construction of $s$-minimal codes of length at most $2^{O(s^2logs)}q^sk$.


 In this paper, we prove a new lower bound on the sizes of $t$-fold $s$-blocking sets (see Theorem 3.1), which is stronger than the classical result in \cite{Beu}. Let $m(k,s,q)$ be the minimal length $n$ such that there exists a $s$-minimal $[n, k]_q$-linear code.
 Then we obtain a lower  bound on $m(k,s,q)$ without condition $s \leq q$. Moreover a characterization of $s$-minimal codes using cutting $s$-blocking set is obtained. It is also proved that $(s+1)$-minimal codes are $s$-minimal codes, see Theorem 4.5. Several examples of minimal codes which is not $2$-minimal are constructed, see Example 4.7 and 4.8.
 A generalized Ashikhminand-Barg condition which is sufficient for $s$-minimal
 code is proved. Many $s$-minimal codes satisfying the generalized Ashkhmin-Barg condition or violating the Ashikhmin-Barg condition are also constructed, see Section 7 and 8.

The rest of the paper is organized as follows: Section 2 gives some preliminaries and some notations.  Section 3 provides some new lower bound for the size of $t$-fold $s$-blocking sets.
In Section 4, we obtain a characterization of s-minimal codes as cutting $s$-blocking
sets.
In Section 5,  we study some properties of $s$-minimal subcodes.
 In Section 6,  we  generalize of the Ashikhminand-Barg theorem for $s$-minimal codes.
In Section 7,  we construct several infinite families of $s$-minimal codes from Solomon-Stiffler codes.
In section 8, we give constructions of $s$-minimal codes that violate the generalized Ashikhmin-Barg condition.

\section{Preliminaries}
This section introduces some notions and basic properties used in this paper.
Let $\mathrm{PG}(k-1,q)$ be the
finite projective geometry of dimension $k-1$ over  $\mathbb{F}_q^{k}$.
The $t$-fold $s$-blocking set in $\mathrm{PG}(k-1,q)$ is defined in the following definition.

\begin{Definition}
 Let $t$, $s$ and $k$ be positive integers such that $1\leq s\leq k-1.$
  A $t$-fold $s$-blocking set of $\mathrm{PG}(k-1,q)$ is a set $B\subseteq \mathrm{PG}(k-1,q)$ such that $| B \bigcap U|\ge t$ for every projective subspace $U \subset\mathrm{PG}(k-1,q)$ of codimension $s$.
 When $s = t = 1$, $B$ is simply called a blocking set.
\end{Definition}

A cutting $s$-blocking set of $\mathrm{PG}(k-1,q)$  was introduced in  \cite{BB}.
A cutting $s$-blocking set is actually a $(k-s)$-fold strong blocking set, which was introduced
independently in \cite{DG11}.

\begin{Definition}[\cite{BB}]
 Let $s$ and $k$ be positive integers such that $1\leq s\leq k-1.$
 A cutting $s$-blocking set of $\mathrm{PG}(k-1,q)$ is a  set $B\subseteq\mathrm{PG}(k-1,q)$ such that $U$ is generated by $B \bigcap U$ for every projective subspace $U \subset\mathrm{PG}(k-1,q)$ of codimension $s$.
 A cutting $1$-blocking set is called a cutting blocking set (also strong blocking set).
\end{Definition}


For a vector space $U$ over $\mathbb{F}_q$ of dimension $k$, we assume that
 $$\,\,{\rm SUB}^{s}(U)=\{V\,|\,V\text{ is a subspace of $U$ with dimension }s\}$$ for $0\leq s\leq k.$
For a  nonempty subset $X$ of $\mathbb{F}_q^n$, the {\it support} of $X$ is $$\mathrm{supp}(X)=\{j\in [n] \,|\,\text{there exists }\,\mathbf{x}=(x_{1}, x_2, \cdots,x_{n})\in X\text{ with } x_{j}\neq0\},$$
and the {\it support weight} of $X$ is $w(X)=  |\mathrm{supp}(X)|.$
For any vector $\mathbf{x}\in \mathbb{F}_q^{n}$,   the  support weight of the subspace generated by $\mathbf{x}$ is equal to the Hamming weight of $\mathbf{x}$ denoted by $w(\mathbf{x}).$

\begin{Definition}
For an $[n,k]_q$-linear code $C$ and $1\leq s\leq k$, the  {\it$s$-generalized Hamming weight} ($s$-GHW) of $C$ is defined as $$d_{s}(C)=\min\{|\mathrm{supp}(U)|\,\big|\,  U \in {\rm SUB}^{s}(C)\}.$$
The set $\{d_{1}(C),d_{2}(C),\cdots,d_{k}(C)\}$ is called the  {\it generalized Hamming weight hierarchy} of $C$.
\end{Definition}

When $s=1$, the parameter $d_{1}(C)$ is the minimum Hamming weight of $C$.
For an $[n,k]_q$-linear code $C$, the {\it monotonicity bound on the generalized Hamming weights} \cite{W} is provided as following:
$$1\leq d_1(C) < d_2(C)<\cdots < d_k(C) \leq  n.$$
 The monotonicity bound on the generalized Hamming weights produces {\it the Singleton bound on the generalized Hamming weights} as following: $$d_s(C) \leq  n-k+s $$ for $1\leq s\leq k.$
The following Griesmer bounds for generalized Hamming weight of linear codes are proved in Theorem 1 of  \cite{HK95}.
  \begin{lem} \label{L2}
Let $C$ be an $[n,k]_q$-linear code with the generalized Hamming weight hierarchy $\{d_{1}(C),d_{2}(C),\cdots,d_{k}(C)\}$.
\begin{description}
  \item[(a)] For $1\leq s\leq r\leq k,$
$d_{s}(C)$ and $d_{r}(C)$ satisfy $$ \frac{q^r -1}{q^r-q^{r-s}}d_{s}(C) \leq  d_{r}(C)  .$$
In particular,  when $r=s+1$, $d_{s}(C)$ satisfies $ \frac{q^{s+1} -1}{q^{s+1}-q}d_{s}(C) \leq  d_{s+1}(C)  .$
  \item[(b)] For $1\leq s\leq k,$ $d_{s}(C)$ satisfies $d_s(C)+\sum_{i=1}^{k-s}\lceil \frac{(q-1)d_s(C)}{q^i(q^s-1)}\rceil \leq  n.$
\end{description}
 \end{lem}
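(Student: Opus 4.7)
The plan is to prove (a) by a double-counting argument inside a fixed optimal $r$-dimensional subcode, and to prove (b) by constructing a residual code and invoking the classical Griesmer bound for ordinary minimum distance.

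For (a), I would fix an $r$-dimensional subspace $E \subseteq C$ realizing $d_r(C) = w(E)$ and double-count the set of pairs $(D, j)$ with $D \in \mathrm{SUB}^{s}(E)$ and $j \in \mathrm{supp}(D)$. Counting by $D$ first, each $D$ is an $s$-dimensional subcode of $C$, so $|\mathrm{supp}(D)| \geq d_s(C)$, yielding a lower bound of $\binom{r}{s}_q\, d_s(C)$ on the total. Counting by $j$ first, only $j \in \mathrm{supp}(E)$ contributes, and for each such $j$ the set $E_j := \{c \in E : c_j = 0\}$ is an $(r-1)$-dimensional hyperplane of $E$; the $s$-dim subspaces of $E$ having $j$ in their support are exactly those not contained in $E_j$, of which there are $\binom{r}{s}_q - \binom{r-1}{s}_q$. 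Combining gives
\[ \binom{r}{s}_q d_s(C) \leq d_r(C)\left(\binom{r}{s}_q - \binom{r-1}{s}_q\right), \]
and the $q$-Pascal identity $\binom{r}{s}_q = \binom{r-1}{s}_q + q^{r-s}\binom{r-1}{s-1}_q$ together with the ratio $\binom{r-1}{s-1}_q / \binom{r}{s}_q = (q^s - 1)/(q^r - 1)$ converts this directly into $(q^r - 1) d_s(C) \leq (q^r - q^{r-s}) d_r(C)$, which is the claim; the specialization $r = s + 1$ follows immediately.

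For (b), I would adapt the residual-code proof of Griesmer's bound. Fix an $s$-dim subcode $D \subseteq C$ with $w(D) = d := d_s(C)$ and $\mathrm{supp}(D) = S$. A short argument using the strict monotonicity $d_s(C) < d_{s+1}(C)$ shows that $D$ is in fact the full subcode of $C$ supported inside $S$, so the coordinate projection $\pi : C \to \mathbb{F}_q^{[n]\setminus S}$ has kernel exactly $D$ and the residual code $C' := \pi(C)$ has parameters $[n - d, k - s]_q$. The key technical step is to show $d_1(C') \geq \lceil (q-1)d / (q(q^s - 1)) \rceil$: given a nonzero $c' \in C'$ with lift $c \in C \setminus D$, one considers the $(s+1)$-dim subspace $F = D + \langle c \rangle$ and averages Hamming weights across $F$ while exploiting that each $j \in S$ lies in the support of exactly $q^{s-1}(q - 1)$ elements of $D$, together with the constraint $w(F) \geq d_{s+1}(C) \geq d$. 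With this bound in hand, applying the classical Griesmer bound to $C'$ gives
\[ n - d \geq \sum_{i=0}^{k-s-1} \left\lceil \frac{d_1(C')}{q^i} \right\rceil \geq \sum_{i=1}^{k-s} \left\lceil \frac{(q-1)d}{q^i(q^s - 1)} \right\rceil, \]
which on rearrangement is (b).

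The main obstacle is the weight-averaging step in (b). A naive average over $D$ alone produces only $d_1(C) \geq (q-1)d / (q^s - 1)$, which is too weak by a factor of $q$ in the denominator. Obtaining the sharper bound $\lceil (q-1)d / (q(q^s - 1)) \rceil$ requires the refined averaging inside $F = D + \langle c \rangle$ that mirrors the classical residual trick, plus extra care with the interaction of ceilings across the iterated Griesmer summation; this is where the bulk of the computational work lies.
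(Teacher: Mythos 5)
The paper gives no proof of this lemma at all---it is quoted as Theorem~1 of \cite{HK95}---so any argument you supply is necessarily a different route from the paper's. Your part (a) is correct and is the standard double count: over a weight-minimizing $r$-dimensional subcode $E$, the pairs $(D,j)$ with $D\in{\rm SUB}^{s}(E)$ and $j\in\mathrm{supp}(D)$ number at least $\dst{r}{s}_q\,d_s(C)$ when counted by $D$, and exactly $w(E)\bigl(\dst{r}{s}_q-\dst{r-1}{s}_q\bigr)$ when counted by $j$ (the subspaces hitting coordinate $j$ are those not contained in the hyperplane $E_j$); since $\dst{r-1}{s}_q/\dst{r}{s}_q=(q^{r-s}-1)/(q^r-1)$, this yields the claim.

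For (b), your skeleton (pass to the residual code $C'=\pi(C)$ of parameters $[n-d,k-s]_q$ modulo an optimal $s$-dimensional subcode $D$, then apply the classical Griesmer bound) is sound, and the ceiling bookkeeping you worry about is harmless because $x\ge y$ implies $\lceil x\rceil\ge\lceil y\rceil$ term by term. The one step you leave genuinely under-specified is the key bound $d_1(C')\ge\lceil (q-1)d/(q(q^s-1))\rceil$: the inequality you quote for it, $w(F)\ge d_{s+1}(C)\ge d$, is too weak to produce the extra factor $q$ in the denominator, and the ``refined averaging inside $F$'' you anticipate is not actually needed. Instead, note that for a lift $c\in C\setminus D$ of a nonzero $c'\in C'$ one has exactly $w(F)=d+w(c')$ for $F=D+\langle c\rangle$ (off $S=\mathrm{supp}(D)$ every element of the coset $c+D$ agrees with $c$), and then invoke your own part (a) with $r=s+1$: $d+w(c')=w(F)\ge d_{s+1}(C)\ge\frac{q^{s+1}-1}{q(q^s-1)}\,d$, whence $w(c')\ge\frac{(q-1)d}{q(q^s-1)}$. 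With that substitution the proof closes; the case $s=k$ is vacuous since the sum in (b) is then empty.
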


\begin{Definition}
For an $[n,k]_q$-linear code $C$ and $1\leq s\leq k$, the  {\it$s$-maximum Hamming weight} of $C$ is defined as $$D_{s}(C)=\max\{|\mathrm{supp}(U)|\,\big|\, U\in {\rm SUB}^{s}(U) \}.$$
\end{Definition}

When $s=1,$ $D_{1}(C)$ is   the maximum Hamming distance of $C$.
The notion of the subcode support weight distributions of linear codes is provide in the following definition.
\begin{Definition}
Let $C$ be an $[n,k]_q$-linear code,
the sequence $[A_1^{s}(C), A_2^{s}(C),\cdots, A_n^{s}(C)]$ is called the {\it $s$-subcode support weight distribution} ($s$-SSWD) of $C$,
where $$A_j^{s}(C)=|\{ U\,\big{|}\,U\text{ is a subspace of $C$ with dimension $s$ and $w(U)=j$} \}|,$$ for $1\leq j\leq n$ and $1\leq s\leq k.$
\end{Definition}
Note that $A_j(C)=(q-1)A_j^1(C)$  for $ 1\leq j\leq n$.
 Let $G =[G^T_{1},G^T_{2},\cdots,G^T_{n}]$ be a $k\times n$ matrix over $\mathbb{F}_q$,  where $G_i\in\mathbb{F}_q^{k}$ for all $1\leq i\leq n$. For any $V\in {\rm SUB}(\mathbb{F}_q^{k})$, the function $m_{G}: {\rm SUB}(\mathbb{F}_q^{k}) \to \mathbb{N}$ is defined as following:
$$
m_{G}(V)=|\{i\in [n]\,\big{|}\, G_{i} \in V\}|.
$$

\begin{Remark}
Let $G$ and $\tilde{G} $ be two matrices of the same column size such that the columns of $G$ are contained in those columns of $\tilde{G}$. The matrix obtained by puncturing the columns of $G$ from $\tilde{G}$ is denoted by $\tilde{G} \backslash G$.
Then we know that $m_{\tilde{G}}(V)=m_{G}(V)+m_{\tilde{G} \backslash G }(V)$ for any $V\in {\rm SUB}^s(\mathbb{F}_q^{k}) .$
\end{Remark}

Let $C$ be an $[n,k]_q$-linear code with a generator matrix $G$. Then we know that, for any subspace $U\in {\rm SUB}^{s}(C)$, there exists a subspace $V\in {\rm SUB}^{s}(\mathbb{F}_q^k)$ such that $$U=\{ \mathbf{y}G\,|\,  \mathbf{y}\in V\}.$$
The following lemma is easy to be obtained.
\begin{lem} \label{weight}
Assume the notation is as given above.
For an $[n,k]_q$-linear code $C$ with a generator matrix $G=[G^T_{1},G^T_{2},\cdots,G^T_{n}]$ and a subspace $U\in {\rm SUB}^{s}(C)$,
 the support weight of $U$ is $$w(U)= n-m_{G}(V ^{\bot}).$$
Then we have follows:
\begin{description}
  \item[(a)]  The $s$-generalized Hamming weight of $C$ is $$d_s(C)=n-\max\{m_G(V)\,|\, V\in {\rm SUB}^{k-s}(\mathbb{F}_q^{k})\}$$
for $1\leq s\leq k.$
  \item[(b)] The $s$-maximum Hamming weight of $C$ is $$D_s(C)=n-\min\{m_G(V)\,|\, V\in {\rm SUB}^{k-s}(\mathbb{F}_q^{k})\}$$
for $1\leq s\leq k.$

  \item[(c)] The subcode support weight distributions of $C$ satisfies
$$A^s_{j}(C)=|\{ V\in  {\rm SUB}^{k-s}(\mathbb{F}_q^k)\,\big{|}\, m_{G}(V)=n-j \}|$$ for $1\leq s\leq k$ and $1\leq j\leq n.$

\end{description}
\end{lem}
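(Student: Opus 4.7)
The plan is to first establish the main identity $w(U) = n - m_G(V^\perp)$ by a direct coordinate-wise argument, and then derive parts (a), (b), (c) as immediate corollaries using the duality bijection $V \leftrightarrow V^\perp$ between $\mathrm{SUB}^s(\mathbb{F}_q^k)$ and $\mathrm{SUB}^{k-s}(\mathbb{F}_q^k)$.

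First, fix $U \in \mathrm{SUB}^s(C)$ and let $V$ be the unique element of $\mathrm{SUB}^s(\mathbb{F}_q^k)$ with $U = \{\mathbf{y}G : \mathbf{y} \in V\}$; it is unique because $G$ has rank $k$, so $\mathbf{y} \mapsto \mathbf{y}G$ is an injective linear map and restricts to a bijection between $s$-dimensional subspaces of $\mathbb{F}_q^k$ and $s$-dimensional subcodes of $C$. The $j$-th coordinate of $\mathbf{y}G$ equals $\langle \mathbf{y}, G_j \rangle$, so a coordinate $j \in [n]$ fails to lie in $\mathrm{supp}(U)$ precisely when $\langle \mathbf{y}, G_j \rangle = 0$ for every $\mathbf{y} \in V$, which is exactly the condition $G_j \in V^\perp$. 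Counting the non-support coordinates yields
\[
n - w(U) \;=\; |\{j \in [n] : G_j \in V^\perp\}| \;=\; m_G(V^\perp),
\]
which is the asserted identity.

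For (a) and (b), note that as $U$ ranges over $\mathrm{SUB}^s(C)$ the associated $V$ ranges over $\mathrm{SUB}^s(\mathbb{F}_q^k)$, hence $W := V^\perp$ ranges over $\mathrm{SUB}^{k-s}(\mathbb{F}_q^k)$ (orthogonal complementation being a dimension-reversing bijection on subspaces of $\mathbb{F}_q^k$). Rewriting the identity as $w(U) = n - m_G(W)$ and taking the minimum (resp.\ maximum) over $U$ immediately gives (a) (resp.\ (b)), since minimizing $w(U)$ corresponds to maximizing $m_G(W)$ and vice versa. For (c), the same bijection $U \leftrightarrow W$ converts the condition $w(U) = j$ into $m_G(W) = n - j$, so the two cardinalities coincide.

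There is essentially no obstacle here; the lemma is a straightforward bookkeeping exercise once the support of $\mathbf{y}G$ has been rewritten in terms of the pairing with the columns $G_j$. The only points that require care are the injectivity of $\mathbf{y} \mapsto \mathbf{y}G$ (used to identify $s$-dimensional subcodes of $C$ with $s$-dimensional subspaces of $\mathbb{F}_q^k$) and the fact that $V \mapsto V^\perp$ reverses dimensions, both of which are standard facts about non-degenerate bilinear forms on $\mathbb{F}_q^k$.
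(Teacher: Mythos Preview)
Your proof is correct. The paper does not actually prove this lemma; it simply states that the result ``is easy to be obtained'' and moves on. Your argument supplies exactly the natural direct proof the authors presumably had in mind: rewrite the $j$-th coordinate of $\mathbf{y}G$ as $\langle \mathbf{y}, G_j\rangle$, observe that $j\notin\mathrm{supp}(U)$ if and only if $G_j\in V^\perp$, and then use the duality bijection $V\mapsto V^\perp$ between $\mathrm{SUB}^s(\mathbb{F}_q^k)$ and $\mathrm{SUB}^{k-s}(\mathbb{F}_q^k)$ to read off (a), (b), (c). There is nothing to add or correct.
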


\section{Some new lower bound for the size of $t$-fold $s$-blocking sets}
Let $\mathrm{PG}(k-1,q)$ be the
finite projective geometry of dimension $k-1$ over  $\mathbb{F}_q^{k}$.
Then there is a one-to-one correspondence between all the points of $\mathrm{PG}(k-1,q)$ and
${\rm SUB}^{1}(\mathbb{F}_q^k).$
In the following theorem, we provide some new lower bound for the size of $t$-fold $s$-blocking sets. For  a  $t$-fold $s$-blocking set $B$ of $\mathrm{PG}(k-1,q)$, the complementary set of $B$ in $\mathrm{PG}(k-1,q)$ is
denoted by $\mathrm{PG}(k-1,q)\backslash B$.
\begin{Theorem}\label{y1}
 Let $B$ be a  $t$-fold $s$-blocking set of $\mathrm{PG}(k-1,q)$ with $1\leq s\leq k-1$.
\begin{description}
  \item[(a)]   Then  $|B|\ge t\frac{q^k-1}{q^{k-s}-1}. $

   \item[(b)] If $\mathrm{PG}(k-1,q)\backslash B$ does not generate $\mathrm{PG}(k-1,q),$
  then $|B|\ge q^{k-1}  .$

  \item[(c)] If $\mathrm{PG}(k-1,q)\backslash B$ generates $\mathrm{PG}(k-1,q),$ then $$ |B| \ge \min\{  t\frac{q^{s+1}-1}{q-1} ,\, t+\frac{q^2(q^{s}-1)}{q-1}\}.$$
In particular,  when $t\leq q,$ we have that $|B| \ge t\frac{q^{s+1}-1}{q-1}.$

\end{description}
\end{Theorem}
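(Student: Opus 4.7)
I treat the three parts separately; (a) and (b) are short incidence arguments, while (c) is the substantive case.

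For part (a), the plan is a double counting. Count incidence pairs $(P, U)$ with $P \in B \cap U$, as $U$ ranges over all codimension-$s$ projective subspaces of $\mathrm{PG}(k-1,q)$. The $t$-fold hypothesis forces the total count to be at least $t$ times the number of such subspaces. Summing instead over points of $B$ gives $|B|$ times the number of codimension-$s$ subspaces through a fixed point. The ratio of these two Gaussian binomial coefficients telescopes to $(q^k-1)/(q^{k-s}-1)$, giving the stated bound. For part (b), the hypothesis implies that $\mathrm{PG}(k-1,q) \setminus B$ is contained in some projective hyperplane $H$; hence $B \supseteq \mathrm{PG}(k-1,q) \setminus H$, a set of cardinality $(q^k-q^{k-1})/(q-1) = q^{k-1}$.

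For part (c), I plan a case split based on the comparison between $t$ and $q$, since the two competing quantities in the minimum coincide at $t=q$ and swap which one is smaller on either side. When $t \leq q$, the target bound is $t(q^{s+1}-1)/(q-1)$, which is Beutelspacher's classical bound; its proof leverages the generation hypothesis to find a projective frame outside $B$ and inductively count transversal codimension-$s$ subspaces, and I would reproduce that argument. When $t > q$, the novel additive bound $|B| \geq t + q^2(q^s-1)/(q-1)$ must be established. My plan here is: fix a codimension-$s$ subspace $U_0$ meeting $B$ in at least $t$ points (these contribute the summand $t$); then, using the generation condition, extract a family of codimension-$s$ subspaces that intersect $U_0$ only in lower-dimensional flats, and argue via incidence counting that their $t$-fold intersections force at least $q^2(q^s-1)/(q-1)$ additional points of $B$ lying outside $U_0$. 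Combining the two cases then yields $|B| \geq \min\{t(q^{s+1}-1)/(q-1),\, t + q^2(q^s-1)/(q-1)\}$.

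The main obstacle lies in the case $t > q$ of part (c): Beutelspacher's classical counting breaks down in this regime, since his inductive choice of projectively independent points outside $B$ can be blocked when $t$ exceeds $q$. A more careful and possibly iterative use of the generation hypothesis will be needed to extract the residual contribution of $q^2(q^s-1)/(q-1)$ points outside $U_0$ without overcounting or losing control of the $t$-fold condition; identifying the correct family of codimension-$s$ subspaces to sum over is the crux of the argument.
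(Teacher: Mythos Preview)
Your arguments for (a) and (b) are correct and coincide with the paper's.

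For (c), however, your plan diverges from the paper's and leaves a real gap in the $t>q$ case. The paper does \emph{not} split on $t\le q$ versus $t>q$, nor does it invoke Beutelspacher's inductive frame argument. Instead it makes a coding-theoretic move: let $G$ be the $k\times N$ matrix whose columns are representatives of the points in $\mathrm{PG}(k-1,q)\setminus B$ (the generation hypothesis is used here precisely to guarantee that the code $C$ generated by $G$ has dimension $k$). For any $V\in\mathrm{SUB}^{k-s}(\mathbb{F}_q^k)$ the $t$-fold hypothesis forces $m_G(V)\le\frac{q^{k-s}-1}{q-1}-t$, so
\[
d_s(C)\ \ge\ \frac{q^k-q^{k-s}}{q-1}-|B|+t.
\]
Feeding this into the Griesmer bound for generalized Hamming weights (Lemma~\ref{L2}(b)) and comparing with $N=\frac{q^k-1}{q-1}-|B|$ yields
\[
t\ \le\ \sum_{i=1}^{k-s}\Big\lfloor\frac{(|B|-t)(q-1)}{q^i(q^s-1)}\Big\rfloor.
\]
The case split is then on whether the $i=2$ term is already $<1$: if so, only the $i=1$ term survives and one gets $|B|\ge t\frac{q^{s+1}-1}{q-1}$; if not, one reads off $|B|\ge t+\frac{q^2(q^s-1)}{q-1}$ directly. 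The ``in particular'' clause for $t\le q$ then follows by comparing the two branches.

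Your proposed geometric route for $t>q$---fix $U_0$, contribute $t$ points inside it, and harvest the rest from a family of transversal flats---is exactly where you flag the obstacle, and you have not identified the family or the counting that would produce the specific additive constant $q^2(q^s-1)/(q-1)$. It is not clear that a purely incidence-geometric argument of this shape can recover that exact value; the paper's detour through the generalized Griesmer bound is what makes the bound fall out cleanly and uniformly in $t$.
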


\begin{proof}
{\bf (a)} Suppose that $B$ is a  $t$-fold $s$-blocking set of $\mathrm{PG}(k-1,q)$  of size $n$.
Let $\{G_1, G_2, \cdots, G_n\}$ be the subset of  $\mathbb{F}_q^k$ corresponding to  $B$.
 Let $\Lambda=\{ (i, V)\,|\, V\in {\rm SUB}^{k-s}( \mathbb{F}_q^k)\text{ and } G_i\in V\}.$
 Then we know that
 \begin{eqnarray*}
   |\Lambda | &=&   \sum_{i=1}^n | \{  V\in {\rm SUB}^{k-s}( \mathbb{F}_q^k)\,\big{|}\, G_i\in V\}|\\
     &=&   n\dst{k-1}{k-s-1}_q .
 \end{eqnarray*}

Since $| B\bigcap V |\ge t$ for every $V\in {\rm SUB}^{k-s}(\mathbb{F}_q^k),$  we have that
\begin{eqnarray*}
   |\Lambda | &=&   \sum_{ V\in {\rm SUB}^{k-s}( \mathbb{F}_q^k)} | \{ i\in [n]\,\big{|}\, G_i\in V\}|\\
     &=&  \sum_{ V\in {\rm SUB}^{k-s}( \mathbb{F}_q^k)} | B\bigcap V | \\
     &\ge & t\dst{k}{k-s}_q .
 \end{eqnarray*}
Hence we get that $$|B|=n\ge t\frac{\dst{k}{k-s}_q}{\dst{k-1}{k-s-1}_q}=t\frac{q^k-1}{q^{k-s}-1}  .$$

{\bf (b)} Since $\mathrm{PG}(k-1,q) \backslash B$ does not generate $\mathrm{PG}(k-1,q),$ there exists a hyperplane $U$ such that $\mathrm{PG}(k-1,q) \backslash B \subseteq U .$
Then $\mathrm{PG}(k-1,q) \backslash U\subseteq  B$
and $$|B|\ge \frac{q^k-1}{q-1}  -\frac{q^{k-1}-1}{q-1}= q^{k-1}  .$$

{\bf (c)}
Let $\{G_1, G_2, \cdots, G_N\}$ be the subset of  $\mathbb{F}_q^k$ corresponding to  $\mathrm{PG}(k-1,q) \backslash B$, where $N= \frac{q^k-1}{q-1} -|B|.$
Let $C$ be the linear code generated by the $k\times N$ matrix
$$G =[G^T_{1},G^T_{2},\cdots,G^T_{N}].$$
Since $\mathrm{PG}(k-1,q) \backslash B$ generates $\mathbb{F}_q^k,$
we know that $C$ is an $[N,k]_q$-linear code.
For any $U\in {\rm SUB}^{s}(C),$
there exist subspaces $V\in {\rm SUB}^{s}( \mathbb{F}_q^k)$ such that $$U=\{ \mathbf{y}G\,|\,  \mathbf{y}\in V\}.$$
By Lemma~\ref{weight}, we know that
\begin{eqnarray*}
  w(U) &=&n-m_G(V^{\bot})  \\
   &=& \frac{q^k-1}{q-1} -|B|-(\frac{q^{k-s}-1}{q-1}- |B \bigcap V^{\bot}|)  \\
   &=&  \frac{q^k-q^{k-s}}{q-1} -|B| +|B \bigcap V^{\bot}|\\
   &\ge &   \frac{q^k-q^{k-s}}{q-1} -|B| +t
\end{eqnarray*}
and $d_s (C) \ge    \frac{q^k-q^{k-s}}{q-1} -|B| +t .$
By the Griesmer bound on the generalized Hamming weights, we have that
\begin{eqnarray*}
  N  &\ge &d_s(C)+\sum_{i=1}^{k-s}\lceil \frac{(q-1)d_s(C)}{q^i(q^s-1)} \rceil \\
    &\ge & \frac{q^k-q^{k-s}}{q-1} -|B| +t +\sum_{i=1}^{k-s}\lceil \frac{q^k-q^{k-s}-(|B| -t)(q-1)}{q^i(q^s-1)} \rceil\\
    &\ge &  \frac{q^k-q^{k-s}}{q-1} -(|B| -t) +\sum_{i=1}^{k-s}\lceil q^{k-s-i}- \frac{(|B| -t)(q-1)}{q^i(q^s-1)}\rceil \\
    &=&   \frac{q^k-q^{k-s}}{q-1} -(|B| -t) + \sum_{i=1}^{k-s} q^{k-s-i}- \sum_{i=1}^{k-s} \lfloor \frac{(|B| -t)(q-1)}{q^i(q^s-1)} \rfloor\\
    &=&   \frac{q^k-1}{q-1}  -(|B| -t) - \sum_{i=1}^{k-s} \lfloor \frac{(|B| -t)(q-1)}{q^i(q^s-1)} \rfloor .
\end{eqnarray*}
Since $N= \frac{q^k-1}{q-1} -|B|$, we know that $$t\leq \sum_{i=1}^{k-s} \lfloor \frac{(|B| -t)(q-1)}{q^i(q^s-1)} \rfloor \leq  \frac{(|B| -t)(q-1)}{q(q^s-1)}+\sum_{i=2}^{k-s} \lfloor \frac{(|B| -t)(q-1)}{q^i(q^s-1)} \rfloor.$$
 We analysis the following two cases.

 {\em Case 1}. Suppose that $\frac{(|B| -t)(q-1)}{q^2(q^s-1)} <1,$
 which means $\sum\limits_{i=2}^{k-s} \lfloor \frac{(|B| -t)(q-1)}{q^i(q^s-1)} \rfloor=0.$ Then we know that $$t\leq  \frac{( |B|-t)(q-1)}{q(q^s-1)}$$ and $$|B| \ge t\frac{q^{s+1}-1}{q-1}.$$

 {\em Case 2}. Suppose that $\frac{(|B| -t)(q-1)}{q^2(q^s-1)} \ge 1,$  which means $$ |B| \ge t+\frac{q^2(q^{s}-1)}{q-1}.$$
 Hence $ |B| \ge \min\{  t\frac{q^{s+1}-1}{q-1} ,\, t+\frac{q^2(q^{s}-1)}{q-1}\}.$

 In particular,  when $t\leq q,$ we have that $$t+\frac{q^2(q^{s}-1)}{q-1}\ge t+t\frac{q(q^{s}-1)}{q-1} = t\frac{q^{s+1}-1}{q-1} $$
 and $$|B| \ge t\frac{q^{s+1}-1}{q-1}.$$

\end{proof}

\begin{Remark}
 Let $B$ be a  $t$-fold $s$-blocking set of $\mathrm{PG}(k-1,q)$ with $1\leq s\leq k-1$.
Theorem 4 of \cite{Beu} prove that,  if  $t\leq q, $  then $|B|\ge t\frac{q^{s+1}-1}{q-1}.$ But, when $t> q,$ our bounds in Theorem~\ref{y1} still can be used.
\end{Remark}

\section{Characterization of $s$-minimal codes as cutting $s$-blocking sets}

In this section, we provide a connection between $s$-minimal codes and  cutting $s$-blocking sets.
Also,  we show that, if $C$ is a  $(s+1)$-minimal code, then $C$ is a  $s$-minimal code.
In the following definition, we introduce the notion of $s$-minimal subcode and $s$-minimal codes.
\begin{Definition}
Let $C$ be  an $[n,k]_q$-linear code.  For $U\in {\rm SUB}^{s}(C)$,
$U$ is called a {\it $s$-minimal subcode} of $C$, if there exists a $V\in {\rm SUB}^{s}(C)$ such that $\mathrm{supp}(V)\subseteq \mathrm{supp}(U)$ then  $V=U.$
 If
$U$ is a $s$-minimal subcodes of $C$ for every $U\in {\rm SUB}^{s}(C)$, then $C$ is called a $s$-minimal   code.
\end{Definition}

When $s=1$,  a $s$-minimal linear  code is also called a {\it minimal code} \cite{AB}.

A cutting blocking set of $\mathrm{PG}(k-1,q)$ of size $n$ corresponds to
a minimal $[n, k]_q$-linear code in \cite{AB20,HN,TQ}.
In the following theorem, we prove that a cutting $s$-blocking set of $\mathrm{PG}(k-1,q)$ of size $n$ corresponds to
a projective $s$-minimal $[n, k]_q$-linear code.
\begin{Theorem}\label{yy}
  Let $s$ and $k$ be positive integers such that $1\leq s\leq k-1.$
There exists a  cutting $s$-blocking set of $\mathrm{PG}(k-1,q)$ of size $n$ if and only if there exists  a projective $s$-minimal $[n, k]_q$-linear code.
\end{Theorem}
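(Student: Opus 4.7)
The plan is to work through the bijection between $s$-dimensional subcodes $U\in{\rm SUB}^s(C)$ and $s$-dimensional subspaces $V\in{\rm SUB}^s(\mathbb{F}_q^k)$ given by $U=\{yG:y\in V\}$. By Lemma~\ref{weight}, we have $\mathrm{supp}(U)=\{j\in[n]:G_j\notin V^\perp\}$. Writing $W=V^\perp$, a codimension-$s$ subspace of $\mathbb{F}_q^k$, this translates the support-containment relation on $s$-dimensional subcodes into an inclusion between the index sets of columns of $G$ lying in the corresponding codimension-$s$ subspaces. Recall also that projectivity ($d(C^\perp)\geq 3$) is exactly the statement that $G_1,\ldots,G_n$ represent $n$ pairwise distinct points in $\mathrm{PG}(k-1,q)$.

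For the forward direction, assume $C$ is a projective $s$-minimal $[n,k]_q$-linear code with generator matrix $G$, and let $B$ be the set of $n$ distinct projective points determined by the columns $G_1,\ldots,G_n$. Suppose for contradiction that $B$ is not a cutting $s$-blocking set: some codimension-$s$ subspace $W\subset\mathbb{F}_q^k$ satisfies $W':=\mathrm{span}(B\cap W)\subsetneq W$, so $\dim W'\leq k-s-1$. Since $\mathbb{F}_q^k/W'$ has dimension at least $s+1$, $W'$ is contained in at least two distinct codimension-$s$ subspaces, so we may choose $W^*\neq W$ of codimension $s$ with $W^*\supseteq W'$. Then $\{j:G_j\in W\}=\{j:G_j\in W'\}\subseteq\{j:G_j\in W^*\}$, so the $s$-dimensional subcodes $U,U^*$ corresponding to $W^\perp,(W^*)^\perp$ satisfy $U\neq U^*$ and $\mathrm{supp}(U^*)\subseteq\mathrm{supp}(U)$, contradicting $s$-minimality.

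For the converse, given a cutting $s$-blocking set $B$ of size $n$, pick representatives $G_1,\ldots,G_n\in\mathbb{F}_q^k\setminus\{0\}$ of the points of $B$ and let $C$ be the linear code generated by $G=[G_1^T,\ldots,G_n^T]$. The set $B$ must span $\mathbb{F}_q^k$, otherwise a hyperplane $W_0$ would contain $B$, but then any codimension-$s$ subspace $W\not\subseteq W_0$ would fail to be spanned by $B\cap W\subseteq W\cap W_0$; hence $\dim C=k$. Distinctness of the points of $B$ gives $d(C^\perp)\geq 3$, i.e., $C$ is projective. For $s$-minimality, suppose $U_1\neq U_2$ in ${\rm SUB}^s(C)$ correspond to $W_1,W_2\in{\rm SUB}^{k-s}(\mathbb{F}_q^k)$ and $\mathrm{supp}(U_1)\subseteq\mathrm{supp}(U_2)$. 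The dictionary gives $B\cap W_2\subseteq W_1$, and since $B$ is cutting, $W_2=\mathrm{span}(B\cap W_2)\subseteq W_1$; equality of dimensions forces $W_1=W_2$ and therefore $U_1=U_2$, a contradiction.

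The crux is the identity $\{j:G_j\in W\}=\{j:G_j\in W'\}$ used in the forward direction, which allows a witness of failed cutting to produce two distinct $s$-dimensional subcodes with nested supports; the rest is a careful dictionary translation. The main technical points to verify are the existence of the auxiliary codimension-$s$ subspace $W^*$ (a small dimension count) and the fact that a cutting $s$-blocking set necessarily generates the ambient space $\mathbb{F}_q^k$, which is needed so that the constructed code attains full dimension $k$.
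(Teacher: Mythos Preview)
Your proof is correct and follows essentially the same dictionary argument as the paper's proof, just with the two directions labeled oppositely. In fact you are slightly more careful: you explicitly verify that a cutting $s$-blocking set must span $\mathbb{F}_q^k$ (so that the resulting code really has dimension $k$), a point the paper simply asserts.
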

\begin{proof}
$(\Rightarrow)$ Suppose that $B$ is a  cutting $s$-blocking set of $\mathrm{PG}(k-1,q)$  of size $n$.
Let $\{G_1, G_2, \cdots, G_n\}$ be the subset of  $\mathbb{F}_q^k$ corresponding to  $B$, where $\mathbf{0}\neq G_i\in\mathbb{F}_q^{k}$ for all $1\leq i\leq n$.
Let $C$ be an $[n, k]_q$-linear code generated by the $k\times n$ matrix
$$G =[G^T_{1},G^T_{2},\cdots,G^T_{n}].$$
Since $\langle G_i\rangle \neq \langle G_j\rangle$ for $i\neq j,$
we know that $C$ is projective.

Let $U_1$ and $U_2$ be in ${\rm SUB}^{s}(C)$ such that $\mathrm{supp}(U_1)\subseteq \mathrm{supp}(U_2).$
There exist subspaces $V_1$ and $V_2$ in ${\rm SUB}^{s}( \mathbb{F}_q^k)$ such that $$U_1=\{ \mathbf{y}G\,|\,  \mathbf{y}\in V_1\} \text{ and } U_2=\{ \mathbf{y}G\,|\,  \mathbf{y}\in V_2\}.$$
Let $\tilde{V}_j^{\bot}$ be the projective subspace of $\mathrm{PG}(k-1,q)$ of codimension $s$ corresponding to $V_j^{\bot}$
for $j=1,2.$
Since  $ [n] \backslash \mathrm{supp}(U_j)= \{ i\in [n]\,|\,   G_i \in   V_j^{\bot} \}$ for $j=1,2$
and $ [n] \backslash \mathrm{supp}(U_2) \subseteq [n] \backslash \mathrm{supp}(U_1),$
we have that  $$ B\bigcap  \tilde{V}_2^{\bot} \subseteq  B\bigcap  \tilde{V}_1^{\bot}  .$$
Since  $B$ is a  cutting $s$-blocking set,
we get that $$ \tilde{V}_2^{\bot}=\langle B\bigcap  \tilde{V}_2^{\bot} \rangle \subseteq  \langle B\bigcap  \tilde{V}_1^{\bot}\rangle =  \tilde{V}_1^{\bot}$$
and $ \tilde{V}_1^{\bot}=\tilde{V}_2^{\bot}.$ Then $V_1=V_2$ and $U_1=U_2$.
Hence $C$ is a projective $s$-minimal $[n, k]_q$-linear code.

$(\Leftarrow)$ Suppose that $C$  is a projective $s$-minimal $[n, k]_q$-linear code with a generator matrix
$$G =[G^T_{1},G^T_{2},\cdots,G^T_{n}],$$   where $\mathbf{0}\neq G_i\in\mathbb{F}_q^{k}$ for all $1\leq i\leq n$.
Since $C$ is projective,
we know that $\langle G_i\rangle \neq \langle G_j\rangle$ for $i\neq j$.
Let $B$ be a set of $\mathrm{PG}(k-1,q)$  of size $n$  corresponding to $\{G_1, G_2, \cdots, G_n\}$.
Then the size of $B$ is $n$.

Suppose that there exists a   projective subspace $\tilde{V}_1$ of $\mathrm{PG}(k-1,q)$ of codimension $s$
such that $\tilde{V}_1$ is not generated by $B \bigcap \tilde{V}_1.$
Then there exists a  projective subspace $\tilde{V}_2$ of $\mathrm{PG}(k-1,q)$ of codimension $s$
such that $$ B \bigcap V_1 \subsetneqq  B \bigcap V_2.$$

Let $V_j\in {\rm SUB}^{k-s}( \mathbb{F}_q^k)$ such that $V_j$ corresponding  $\tilde{V}_j$ for $j=1,2.$
Let $U_j=\{ \mathbf{y}G\,|\,  \mathbf{y}\in V_j^{\bot}\}$ for $j=1,2.$
Then we know that $U_1$ and $U_2$ are in ${\rm SUB}^{s}(C)$ such that $$\mathrm{supp}(U_1)\subsetneqq \mathrm{supp}(U_2),$$ which is a contradiction.
Hence $B$ is a  cutting $s$-blocking set of $\mathrm{PG}(k-1,q)$ of size $n$.
\end{proof}

For an $[n, k]_q$-linear code $C$ with a generator matrix $G =[G^T_{1},G^T_{2},\cdots,G^T_{n}],$
we construct a linear code $\tilde{C}$ by puncturing on all the $j$-th coordinates of $C$ such that there exists $i$ with $1\leq i <j$ satisfying $G_{i}=\alpha G_{j}$  for some $\alpha \in \mathbb{F}_q^{*}.$
The linear code $\tilde{C}$ is called a  {\it projection} of $C$.
It is easy to get that the linear code $\tilde{C}$ is projective.
In the following theorem, we prove that $C$ is a  $s$-minimal code if and only if $\tilde{C}$ is a  $s$-minimal code.

 \begin{Theorem}\label{yyy}
Let $C$ be an $[n, k]_q$-linear code  with a projection $\tilde{C}$ for $1\leq s\leq k-1$. Then $C$ is a  $s$-minimal code if and only if $\tilde{C}$ is a  $s$-minimal code.
\end{Theorem}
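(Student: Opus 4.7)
The plan is to set up a natural bijection between $s$-dimensional subcodes of $C$ and those of $\tilde{C}$ via the ambient space $\mathbb{F}_q^k$, and then show that this bijection preserves support-inclusion, hence preserves $s$-minimality in both directions.

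First, I would parametrize both ${\rm SUB}^{s}(C)$ and ${\rm SUB}^{s}(\tilde{C})$ uniformly. Since $G$ has rank $k$, the map $V\mapsto U(V):=\{\mathbf{y}G\,|\,\mathbf{y}\in V\}$ is a bijection from ${\rm SUB}^{s}(\mathbb{F}_q^k)$ onto ${\rm SUB}^{s}(C)$. By construction of the projection, $\tilde{G}$ also has rank $k$, so $V\mapsto \tilde{U}(V):=\{\mathbf{y}\tilde{G}\,|\,\mathbf{y}\in V\}$ is a bijection from ${\rm SUB}^{s}(\mathbb{F}_q^k)$ onto ${\rm SUB}^{s}(\tilde{C})$. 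Using Lemma~\ref{weight} (or a direct calculation), I would record the key identity
\[
[n]\setminus \mathrm{supp}(U(V))=\{i\in[n]\,|\, G_i\in V^{\bot}\},
\]
and analogously for $\tilde{U}(V)$ with the kept indices.

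Next, I would exploit the fact that the condition $G_i\in V^{\bot}$ depends only on the line $\langle G_i\rangle\in\mathrm{PG}(k-1,q)$. If a column $G_j$ is removed during projection because some $G_i=\alpha G_j$ with $\alpha\in\mathbb{F}_q^{*}$ and $i<j$ exists, then $G_j\in V^{\bot}$ if and only if $G_i\in V^{\bot}$. Consequently, for any two subspaces $V_1,V_2\in {\rm SUB}^{s}(\mathbb{F}_q^k)$, the containment
\[
\{i\in[n]\,|\, G_i\in V_2^{\bot}\}\subseteq \{i\in[n]\,|\, G_i\in V_1^{\bot}\}
\]
holds for $C$ if and only if the analogous containment holds for the kept indices of $\tilde{G}$. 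Taking complements, this is precisely the statement
\[
\mathrm{supp}(U(V_1))\subseteq\mathrm{supp}(U(V_2))\iff \mathrm{supp}(\tilde{U}(V_1))\subseteq\mathrm{supp}(\tilde{U}(V_2)).
\]
Similarly, $U(V_1)=U(V_2)\iff V_1=V_2\iff \tilde{U}(V_1)=\tilde{U}(V_2)$ by the bijections above.

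Combining these two equivalences, $C$ fails to be $s$-minimal (there exist $U_1\neq U_2$ in ${\rm SUB}^{s}(C)$ with $\mathrm{supp}(U_1)\subseteq\mathrm{supp}(U_2)$) if and only if $\tilde{C}$ fails to be $s$-minimal, which yields the theorem. I do not anticipate a serious obstacle here; the only place that requires care is verifying that puncturing redundant columns does not drop the rank of the generator matrix (so that the parametrization by ${\rm SUB}^{s}(\mathbb{F}_q^k)$ genuinely biject onto ${\rm SUB}^{s}(\tilde{C})$), and that the duplicated-column bookkeeping really makes the set equalities on the index side equivalent between $C$ and $\tilde{C}$.
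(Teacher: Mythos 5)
Your proposal is correct and follows essentially the same route as the paper's proof: both parametrize the $s$-dimensional subcodes of $C$ and of $\tilde{C}$ by the same subspaces $V\in{\rm SUB}^{s}(\mathbb{F}_q^k)$ via the two generator matrices, and both show that support inclusion transfers in each direction because every punctured column is a nonzero scalar multiple of a retained column. Your reformulation through the complement sets $\{i\,|\,G_i\in V^{\bot}\}$ and your explicit check that $\tilde{G}$ retains rank $k$ are only minor expository refinements of the paper's argument.
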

\begin{proof}
Let $G =[G^T_{1},G^T_{2},\cdots,G^T_{n}]$ be a generator matrix  of $C$, and let $\tilde{G} =[\tilde{G}^T_{1},\tilde{G}^T_{2},\cdots,\tilde{G}^T_{m}]$ be a generator matrix  of $\tilde{C}.$
Assume that $V_1$ and $V_2$ are in ${\rm SUB}^{s}( \mathbb{F}_q^k).$

Let  $U_i=\{ \mathbf{y}G\,|\,  \mathbf{y}\in V_i\} $ and $\tilde{U}_i=\{ \mathbf{y}\tilde{G}\,|\,  \mathbf{y}\in V_i\}$ for $i=1,2.$
If $\mathrm{supp}(U_1)\subseteq \mathrm{supp}(U_2),$
then we have that  $\mathrm{supp}(\tilde{U}_1)\subseteq \mathrm{supp}(\tilde{U}_2) .$

By the definition of $\tilde{C}$, for any $1\leq i \leq n,$ there exists $j$  with $1\leq j\leq m $ such that $G_{i}=\alpha\tilde{G}_{j} $ for some $\alpha \in \mathbb{F}_q^{*}.$
Therefore, if $\mathrm{supp}(\tilde{U}_1)\subseteq \mathrm{supp}(\tilde{U}_2) ,$
then we have that  $\mathrm{supp}(U_1)\subseteq \mathrm{supp}(U_2).$
Then $U_2$ is a $s$-minimal subcode of $C$  if and only if  $\tilde{U}_2$ is a $s$-minimal subcode of $\tilde{C}$.
Hence $C$ is a  $s$-minimal code if and only if $\tilde{C}$ is a  $s$-minimal code.
\end{proof}

Recall that  $m(k,s,q)$ is the minimal length $n$ such that there exists a $s$-minimal $[n, k]_q$-linear code.
By Theorem~\ref{yyy}, we  have that $m(k,s,q)$ is the minimal length $n$ such that there exists a projective $s$-minimal $[n, k]_q$-linear code.
By Theorem~\ref{yy}, we  have that $m(k,s,q)$ is the minimal number $n$ such that there exists a  cutting $s$-blocking set of $\mathrm{PG}(k-1,q)$ of size $n$.
In the following theorem, we provide some  lower bounds for $m(k,s,q)$ by Theorem~\ref{y1}.

\begin{Corollary}\label{y4}
Assume that the notation is as given above.
\begin{description}
  \item[(a)]
Then $m(k,s,q)\ge (k-s)\frac{q^k-1}{q^{k-s}-1}.$
\item[(b)]   If  $k-s\leq q, $  then $m(k,s,q)\ge (k-s)\frac{q^{s+1}-1}{q-1}. $
\end{description}
\end{Corollary}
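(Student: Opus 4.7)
The plan is to reduce the corollary to a counting statement about cutting $s$-blocking sets and then let Theorem~\ref{y1} do the heavy lifting. By Theorem~\ref{yyy}, the projection of any $s$-minimal code is itself $s$-minimal with at most as many columns, so the optimal $n$ is attained by a \emph{projective} $s$-minimal code. By Theorem~\ref{yy}, projective $s$-minimal $[n,k]_q$-codes correspond bijectively to cutting $s$-blocking sets of size $n$ in $\mathrm{PG}(k-1,q)$. Hence $m(k,s,q)$ equals the minimum cardinality of a cutting $s$-blocking set in $\mathrm{PG}(k-1,q)$, and everything now takes place in projective geometry.

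The key structural observation is that every cutting $s$-blocking set $B$ is automatically a $(k-s)$-fold $s$-blocking set. Indeed, a codimension-$s$ projective subspace $U$ has projective dimension $k-s-1$, and since $B\cap U$ is required to generate $U$, it must contain at least $k-s$ points, so $|B\cap U|\geq k-s$. This is exactly the hypothesis of Theorem~\ref{y1} with $t=k-s$. Part (a) of the corollary is then an immediate substitution into Theorem~\ref{y1}(a).

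For part (b), the assumption $k-s\leq q$ is precisely the range $t\leq q$ in Theorem~\ref{y1}(c), which yields $|B|\geq (k-s)\frac{q^{s+1}-1}{q-1}$ under the further hypothesis that $\mathrm{PG}(k-1,q)\setminus B$ generates the ambient space. The main obstacle I anticipate is handling the complementary case: Theorem~\ref{y1}(b) only gives $|B|\geq q^{k-1}$, which does not always dominate the target bound $(k-s)\frac{q^{s+1}-1}{q-1}$ (for instance when $q=2$, $k=3$, $s=1$). My fallback would be to exploit the fact that if $\mathrm{PG}(k-1,q)\setminus B$ is contained in a hyperplane $H$, then $B$ already contains all $q^{k-1}$ points outside $H$, and moreover $B\cap H$ must itself be a cutting $(s-1)$-blocking set of $H\cong\mathrm{PG}(k-2,q)$; an induction on $k$ plus this reduction should close the gap and recover the full bound $(k-s)\frac{q^{s+1}-1}{q-1}$ in every case.
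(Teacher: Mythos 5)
Your argument for part (a) is exactly the paper's: the proof of Corollary~\ref{y4} given there is precisely the two-line observation that a cutting $s$-blocking set is a $(k-s)$-fold $s$-blocking set (a codimension-$s$ subspace has vector-space dimension $k-s$, so a generating subset has at least $k-s$ points), followed by a citation of Theorems~\ref{yy} and~\ref{y1}; the reduction to projective codes via Theorem~\ref{yyy} is also stated in the paper immediately before the corollary. The obstacle you anticipate for part (b) is real, and the paper does \emph{not} address it: the clause of Theorem~\ref{y1}(c) giving $|B|\ge t\frac{q^{s+1}-1}{q-1}$ for $t\le q$ is proved only under the hypothesis that $\mathrm{PG}(k-1,q)\setminus B$ generates the ambient space, and in the complementary case Theorem~\ref{y1}(b) yields only $q^{k-1}$, which can fall short of the target (your example $q=2$, $k=3$, $s=1$ gives $4<6$ while $k-s=2\le q$). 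Your proposed repair does close this gap: if $\mathrm{PG}(k-1,q)\setminus B\subseteq H$ for a hyperplane $H$, then $B$ contains the $q^{k-1}$ points off $H$, and $B\cap H$ is a cutting $(s-1)$-blocking set of $H\cong\mathrm{PG}(k-2,q)$, since a codimension-$(s-1)$ subspace of $H$ is a codimension-$s$ subspace of the ambient space contained in $H$. Induction (base case $s=1$: $B\cap H$ spans $H$, so $|B\cap H|\ge k-1$) gives $|B\cap H|\ge(k-s)\frac{q^{s}-1}{q-1}$, and since $k-s\le q$ and $s\le k-1$ force $(k-s)q^{s}\le q^{k-1}$, one obtains $|B|\ge q^{k-1}+(k-s)\frac{q^{s}-1}{q-1}\ge(k-s)\frac{q^{s+1}-1}{q-1}$. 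So your write-up follows the paper's route but is more complete than the paper's own proof of part (b), which silently skips the non-generating case.
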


\begin{proof}
Note that, if $B$ is a  cutting $s$-blocking set of $\mathrm{PG}(k-1,q)$ of size $n$, then $B$ is a
 $(k-s)$-fold $s$-blocking set of $\mathrm{PG}(k-1,q).$
It is a direct result by Theorem~\ref{y1} and Theorem~\ref{yy}.
\end{proof}

In the following theorem, we show that, if $C$ is a  $(s+1)$-minimal code, then $C$ is a  $s$-minimal code.

\begin{Theorem}\label{T5}
 Let $C$ be an  $[n,k]_q$-linear code and $1\leq s\leq k-2.$

\begin{description}
  \item[(a)] If $C$ is a  $(s+1)$-minimal code, then $C$ is a  $s$-minimal code.
\item[(b)]  If $B$ is a  cutting $(s+1)$-blocking set of $\mathrm{PG}(k-1,q),$
then $B$ is a  cutting $s$-blocking set of $\mathrm{PG}(k-1,q).$
\end{description}
\end{Theorem}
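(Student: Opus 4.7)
The plan is to establish part (b) first via a direct geometric argument and then deduce part (a) using the correspondence in Theorem~\ref{yy} together with the projection construction in Theorem~\ref{yyy}.

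For part (b), I would take an arbitrary projective subspace $U'$ of $\mathrm{PG}(k-1,q)$ of codimension $s$ and put $V := \langle B \cap U' \rangle \subseteq U'$, arguing by contradiction that $V \subsetneq U'$. The key observation is that every hyperplane of $U'$ has codimension $s+1$ in $\mathrm{PG}(k-1,q)$, so the cutting $(s+1)$-blocking hypothesis can be applied to it. If the codimension of $V$ inside $U'$ is at least $2$, then at least two distinct hyperplanes $H_1, H_2$ of $U'$ contain $V$; for each such $H_j$ one has $B \cap H_j = B \cap U' \subseteq V$, but cutting $(s+1)$-blocking forces $H_j \subseteq V$, so $H_1 = V = H_2$, contradicting $H_1 \neq H_2$. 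If instead $V$ is itself a hyperplane of $U'$, then (using $s \leq k-2$, so that $U'$ has vector dimension at least $2$ and hence more than one hyperplane) I would pick a hyperplane $H'$ of $U'$ distinct from $V$; since $V$ and $H'$ are distinct hyperplanes of $U'$, their intersection is a proper subspace of $H'$, and $B \cap H' \subseteq B \cap U' \subseteq V$ yields $B \cap H' \subseteq V \cap H' \subsetneq H'$, again contradicting the cutting $(s+1)$-blocking property.

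For part (a), given an $(s+1)$-minimal $[n,k]_q$-linear code $C$, I would pass to its projection $\tilde{C}$ via Theorem~\ref{yyy}: $\tilde{C}$ is projective and is $s$-minimal (resp.\ $(s+1)$-minimal) if and only if $C$ is. Theorem~\ref{yy} then identifies the projective $(s+1)$-minimal code $\tilde{C}$ with a cutting $(s+1)$-blocking set $B \subseteq \mathrm{PG}(k-1,q)$. Part (b) upgrades $B$ to a cutting $s$-blocking set, and a second application of Theorem~\ref{yy} returns that $\tilde{C}$, and hence $C$, is $s$-minimal.

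The main obstacle is the boundary case in (b) where $V$ is already a hyperplane of $U'$, because one cannot then produce two distinct hyperplanes of $U'$ containing $V$. The resolution is to instead construct a single hyperplane of $U'$ on which $B$ must be deficient, and this construction uses in an essential way the hypothesis $s \leq k-2$ to ensure $U'$ actually possesses more than one hyperplane.
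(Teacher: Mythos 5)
Your proof is correct, but it inverts the logical order the paper uses and replaces the key argument with a different one. The paper proves part (a) first by a direct coding-theoretic argument: assuming $C$ is not $s$-minimal, it picks nested $s$-dimensional subcodes $U_1, U_2$ with a coordinate $i_0 \in \mathrm{supp}(U_2)\setminus\mathrm{supp}(U_1)$, then adjoins two auxiliary codewords $\mathbf{z}$ and $\mathbf{x}+\alpha\mathbf{z}$ (whose existence uses $k\ge s+2$ via a counting bound $q^k > q^{s+1}+q^s$) to build $(s+1)$-dimensional subcodes whose nesting, combined with $(s+1)$-minimality, forces the contradiction $i_0 \in \mathrm{supp}(U_1)$ or $i_0\in\mathrm{supp}(\mathbf{x}+\alpha\mathbf{z})$; part (b) is then a one-line corollary via the correspondence of Theorem~\ref{yy}. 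You instead prove (b) directly by a clean geometric hyperplane argument --- every hyperplane of a codimension-$s$ subspace $U'$ is a codimension-$(s+1)$ subspace, and if $\langle B\cap U'\rangle = V \subsetneq U'$ then some hyperplane $H'$ of $U'$ with $H'\not\subseteq V$ satisfies $B\cap H' \subseteq V\cap H'\subsetneq H'$, contradicting the cutting $(s+1)$-blocking hypothesis (your two cases can in fact be merged into this single statement) --- and then deduce (a) through Theorems~\ref{yyy} and~\ref{yy}. Both uses of the hypothesis $s\le k-2$ are genuine: yours guarantees $U'$ has more than one hyperplane, the paper's guarantees the auxiliary codeword $\mathbf{z}$ exists. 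Your route is arguably more transparent and avoids the codeword manipulations, at the modest cost of leaning on Theorem~\ref{yy} as a pointwise correspondence between a specific code and a specific point set (which its proof does establish, though the theorem is phrased only as an existence equivalence) and of implicitly assuming the generator matrix has no zero columns, an edge case the paper's setup also glosses over.
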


\begin{proof}
 {\bf (a)} Suppose that $C$ is not a  $s$-minimal code.
  There there exist $U_1,\,U_2\in  {\rm SUB}^{s}(C)$ such that $\mathrm{supp}(U_1) \subseteq \mathrm{supp}(U_2)$ and
  $\mathrm{supp}(U_1) \neq \mathrm{supp}(U_2).$
  Then there exist a integer $i_0$
  and a codeword $\mathbf{x}\in U_2$ such that $i_0\in  \mathrm{supp}(\mathbf{x}) \subseteq \mathrm{supp}(U_2) $ and $i_0\notin\mathrm{supp}(U_1).$
  Since $\mathbf{x}\notin U_1,$ we know that $\dim(\langle\mathbf{x}, U_1\rangle)=s+1.$

  Since $k\ge s+2$ and $q^k>q^{s+1}+q^s,$ there exists $\mathbf{z}\in C$ such that $$\mathbf{z}\notin \langle\mathbf{x}, U_1\rangle \,\text{ and }\,\mathbf{z}\notin U_2.$$ Then we know that
  $$\dim(\langle\mathbf{z}, U_1\rangle)=\dim(\langle\mathbf{z}, U_2\rangle)=s+1.$$
  Note that $\mathrm{supp}(\langle\mathbf{z}, U_1\rangle) \subseteq \mathrm{supp}(\langle\mathbf{z}, U_2\rangle)$.
  Since $C$ is a  $(s+1)$-minimal code, we have that $$\mathrm{supp}(\langle\mathbf{z}, U_1\rangle) = \mathrm{supp}(\langle\mathbf{z}, U_2\rangle)$$
  and $$i_0\in \mathrm{supp}(U_2) \subseteq \mathrm{supp}(\langle\mathbf{z}, U_1\rangle).$$
  Therefore  $i_0\in \mathrm{supp}(\mathbf{z}).$ Since $i_0\in \mathrm{supp}(\mathbf{x}),$
  there exists a nonzero element $\alpha \in \mathbb{F}_q$ such that $i_0\notin \mathrm{supp}(\mathbf{x}+\alpha\mathbf{z}).$

  Note that $\mathrm{supp}(\langle\mathbf{x}+\alpha\mathbf{z}, U_1\rangle) \subseteq \mathrm{supp}(\langle\mathbf{x}+\alpha\mathbf{z}, U_2\rangle)$.
By $\mathbf{z}\notin \langle\mathbf{x}, U_1\rangle$ and $\mathbf{x}\in U_2$, we have that
 $$\dim(\langle\mathbf{x}+\alpha\mathbf{z}, U_1\rangle)=\dim(\langle\mathbf{x}+\alpha\mathbf{z}, U_2\rangle)=s+1.$$
 Since $C$ is a  $(s+1)$-minimal code, we have that $$\mathrm{supp}(\langle\mathbf{x}+\alpha\mathbf{z}, U_1\rangle) = \mathrm{supp}(\langle\mathbf{x}+\alpha\mathbf{z}, U_2\rangle)$$
  and $$i_0\in \mathrm{supp}(U_2) \subseteq \mathrm{supp}(\langle\mathbf{x}+\alpha\mathbf{z}, U_1\rangle).$$
  But we know that $i_0\notin\mathrm{supp}(U_1)$ and $i_0\notin \mathrm{supp}(\mathbf{x}+\alpha\mathbf{z}),$
  which is a contradiction.
 Hence $C$ is a  $s$-minimal code.

 {\bf (b)} It is a direct result of Statement {\bf (a)} and Theorem~\ref{yy}.
\end{proof}

\begin{Remark}
In Example~\ref{ee12pa}, we provide a $[28,5,14]_2$-linear code $C_1$ such that $C_1$  is a  $s$-minimal code for $s=1,2$ but $C_1$  is not a  $s$-minimal code for $s=3,4.$
\end{Remark}

\begin{Example}{\rm
In \cite[Page 8]{CK}, a doubly-even minimal $[29,8,12]_2$ code with the weight enumerator $1+114x^{12}+119x^{16}+22x^{20}$ was given. From the Magma calculation, this code is not a 2-minimal code.
In addition, an $8$-divisible minimal $[42,7,16]_2$ code with the weight enumerator $1+45x^{16}+82x^{24}$ was presented in \cite[Page 8]{CK}. By the Magma calculation, this code is not 2-minimal.
}
\end{Example}

\begin{Example}{\rm
    Let $q=2$ and $n=85$.
    Let the defining set of a cyclic code of length $85$ to be the union of all $2$-cyclotomic cosets modulo  $85$ except the one with coset leader $37$, we obtain a cyclic code $C$ with parameters $[85,8,40]_2$.
    Using Magma, we found that this code $C$ is a two-weight cyclic code with nonzero weights $40$ and $48$. This code $C$ is $s$-minimal for $s=1,2$, but not a $3$-minimal code.
    On the other hand, the optimal linear $[85,8,40]_2$ code $C_1$ in \cite{G} has nonzero weights $40$, $48$ and $56$. This code $C_1$ is minimal but not $2$-minimal.}
\end{Example}

By Theorem~\ref{T5}, we obtain the following lower bound for the minimum Hamming distance of $s$-minimal codes.
\begin{Corollary}\label{yt}
Let $C$ be a $s$-minimal $[n,k,d]_q$-linear code with $1\leq s\leq k-1$. Then
 $$d\ge (q-1)(k-1)+1 .$$
\end{Corollary}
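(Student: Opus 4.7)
The plan is to reduce to the case $s=1$ with $C$ projective, then to identify the positions of a minimum weight codeword with an affine blocking set, and finally to invoke the Jamison / Brouwer--Schrijver theorem.

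First I would apply Theorem~\ref{T5}(a) iteratively to see that any $s$-minimal code with $s\le k-1$ is also $1$-minimal. I would then pass to the projection $\tilde{C}$: by Theorem~\ref{yyy} it remains $1$-minimal, and each weight in $\tilde{C}$ is at most the corresponding weight in $C$ (repeated projective columns of $C$ merge to a single coordinate in $\tilde{C}$), so $d(\tilde{C})\le d(C)$. Hence it suffices to prove the bound when $C$ is a projective minimal $[n,k,d]_q$-code. For such a code, let $G=[G_{1}^{T},\dots,G_{n}^{T}]$ be a generator matrix and $B\subset\mathrm{PG}(k-1,q)$ the corresponding strong blocking set (Theorem~\ref{yy}). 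Pick a minimum weight codeword $\mathbf{c}=\mathbf{y}G$, and let $H\subset\mathrm{PG}(k-1,q)$ be the projective hyperplane determined by $\mathbf{y}^{\bot}\subseteq\mathbb{F}_q^{k}$. By Lemma~\ref{weight}, $|B\cap H|=n-d$, so $|B\setminus H|=d$.

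The crucial step is to show that $T:=B\setminus H$, as a subset of the affine space $\mathrm{AG}(k-1,q):=\mathrm{PG}(k-1,q)\setminus H$, is an affine blocking set, i.e.\ it meets every affine hyperplane of $\mathrm{AG}(k-1,q)$. Any affine hyperplane has the form $H'\setminus H$ for a unique projective hyperplane $H'\neq H$ of $\mathrm{PG}(k-1,q)$. The strong blocking property says that $B\cap H'$ spans $H'$, so this intersection cannot be contained in the codimension-two subspace $H\cap H'$; consequently $B\cap H'$ has a point outside $H$, which lies in $T\cap(H'\setminus H)$.

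Finally, I would invoke the classical Jamison / Brouwer--Schrijver theorem: every affine blocking set in $\mathrm{AG}(k-1,q)$ has size at least $(q-1)(k-1)+1$. Applied to $T$, this yields $d=|T|\ge(q-1)(k-1)+1$, as desired. The main obstacle (really, the nontrivial input) is this affine-blocking-set bound, traditionally established by the polynomial method / combinatorial Nullstellensatz; the remaining steps are direct bookkeeping from the earlier results of the paper.
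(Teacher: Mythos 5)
Your proposal is correct. The first step coincides with the paper's proof: both reduce to the case $s=1$ by iterating Theorem~\ref{T5}(a) to conclude that an $s$-minimal code is minimal. From there the two arguments diverge. The paper simply cites Theorem 2.8 of \cite{AB22}, which states the bound $d\ge (q-1)(k-1)+1$ for minimal codes, and stops. You instead unpack that citation: you pass to the projection $\tilde{C}$ (correctly noting $d(\tilde{C})\le d(C)$ and that minimality and dimension are preserved, so the projective case suffices), identify the support of a minimum weight codeword with the complement $B\setminus H$ of a hyperplane section of the associated strong blocking set $B$, verify that $B\setminus H$ blocks every affine hyperplane of $\mathrm{PG}(k-1,q)\setminus H$ because $B\cap H'$ spans $H'$ and hence cannot lie in the codimension-two space $H\cap H'$, and then invoke the Jamison / Brouwer--Schrijver bound $(k-1)(q-1)+1$ on affine blocking sets. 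This is essentially the standard proof of the result the paper cites, so your route is self-contained modulo a classical external theorem rather than modulo a recent one; it buys transparency (the reader sees exactly where the bound comes from, and in fact your argument gives the bound for \emph{every} nonzero codeword, not just a minimum weight one) at the cost of length. The only point worth tightening is the implicit assumption that the generator matrix of $\tilde{C}$ has no zero columns so that $B$ is a well-defined point set; this degenerate case is also glossed over in the paper's own treatment of projections.
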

\begin{proof}
By Theorem~\ref{T5}, we know that $C$ is a minimal code.
By Theorem 2.8 of \cite{AB22}, we get that  $d\ge (q-1)(k-1)+1 .$
\end{proof}

\section{$s$-Minimal subcodes of  linear codes}
In this section, we provide some properties of $s$-minimal subcodes,  and give some sufficient conditions of  that a  subcode is a $s$-minimal subcode.
For   an $[n,k]_q$-linear code $C$, let $H$ be the parity-check matrix of $C$. The submatrix of $H$
consisting columns indexed by a subset $X\subseteq[n]$, is denoted  by $H(X)$.
The basic properties of  $s$-minimal subcodes are characterized in the following lemma.
 \begin{lem} \label{L1}
Let $C$ be an $[n,k]_q$-linear code with the generalized Hamming weight hierarchy $\{d_{1}(C),d_{2}(C),\cdots,d_{k}(C)\}$,
and let $H$ be a parity-check matrix of $C$.
\begin{description}
  \item[(a)] For $U\in {\rm SUB}^{s}(C)$, $ U$ is
a $s$-minimal subcode of $C$ if and only if the rank of $H(\mathrm{supp}(U))$ is $|\mathrm{supp}(U)|-s.$

\item[(b)] If $ U$ is
a $s$-minimal subcode of $C$, we have that $|\mathrm{supp}(U)|\leq n-k+s.$
\end{description}
\end{lem}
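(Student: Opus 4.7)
The plan is to reduce both parts to the dimension of the subcode of $C$ supported on $X:=\mathrm{supp}(U)$, namely $C_X=\{{\bf c}\in C\,|\,\mathrm{supp}({\bf c})\subseteq X\}$, and then invoke the standard parity-check identity $\dim C_X=|X|-\mathrm{rank}(H(X))$. The latter holds because $C_X$ can be identified with the kernel of $H(X)$ acting on $\mathbb{F}_q^{|X|}$, via filling in zeros outside $X$.

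For statement \textbf{(a)}, I will first observe that any $V\in{\rm SUB}^{s}(C)$ with $\mathrm{supp}(V)\subseteq X$ is automatically contained in $C_X$, and conversely every $s$-dimensional subspace of $C_X$ has support inside $X=\mathrm{supp}(U)$. Consequently, the set of candidates $V$ appearing in the definition of $s$-minimality is exactly ${\rm SUB}^{s}(C_X)$. Thus $U$ is an $s$-minimal subcode if and only if $U$ is the unique element of ${\rm SUB}^{s}(C_X)$, which happens if and only if $\dim C_X=s$ (the forward direction uses $U\subseteq C_X$ of equal dimension; for the reverse direction, if $\dim C_X>s$ there are many distinct $s$-dimensional subspaces of $C_X$, all violating $s$-minimality). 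Substituting $\dim C_X=|X|-\mathrm{rank}(H(X))$ and rearranging yields the rank condition $\mathrm{rank}(H(\mathrm{supp}(U)))=|\mathrm{supp}(U)|-s$.

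For statement \textbf{(b)}, I will simply note that $H$ has $n-k$ rows, so $\mathrm{rank}(H(\mathrm{supp}(U)))\leq n-k$. Combining this with the equality from \textbf{(a)} gives $|\mathrm{supp}(U)|-s\leq n-k$, i.e.\ $|\mathrm{supp}(U)|\leq n-k+s$.

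The argument is essentially bookkeeping; the only point that requires slight care is the biconditional in \textbf{(a)}. In particular, one must be sure that $\mathrm{supp}(V)\subseteq X$ (not $\mathrm{supp}(V)=X$) is the relevant condition from the definition, and that a proper $s$-dim subspace $V\neq U$ of $C_X$ indeed satisfies $\mathrm{supp}(V)\subseteq X$, which it does trivially. No deeper structural property of $C$ or of generalized Hamming weights is needed for this lemma; the hierarchy $\{d_i(C)\}$ mentioned in the statement is simply context and plays no role in the proof.
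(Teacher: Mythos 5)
Your proof is correct and follows essentially the same route as the paper's: both reduce part \textbf{(a)} to the rank--nullity relation for $H(\mathrm{supp}(U))$, identifying the candidate subspaces $V$ with the $s$-dimensional subspaces of the kernel of $H(\mathrm{supp}(U))$ (the paper phrases this via the generator matrices $G_U$, $G_V$ rather than your shortened code $C_X$, but the linear algebra is identical), and part \textbf{(b)} follows in both cases from $\mathrm{rank}(H(\mathrm{supp}(U)))\leq n-k$. No gaps.
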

\begin{proof}
{\bf (a)} Suppose that $ U$ is a $s$-minimal subcode of $C$.
Let $G_U$ be a generator matrix of $U.$
Then we know that $$G_U(\mathrm{supp}(U))H^T(\mathrm{supp}(U))=\mathbf{0},$$ and the rank of $G_U(\mathrm{supp}(U))$ is equal to $s$. Hence the rank of $H(\mathrm{supp}(U))$ is less than or equal to $|\mathrm{supp}(U)|-s.$
If the rank of $H(\mathrm{supp}(U))$ is less than  $|\mathrm{supp}(U)|-s,$
then there exists  $V\in {\rm SUB}^{s}(C)$ such that $U\neq V$ and $\mathrm{supp}(V)\subseteq \mathrm{supp}(U) ,$
which is a contradiction. Hence the rank of $H(\mathrm{supp}(U))$ is  equal to $|\mathrm{supp}(U)|-s.$

Suppose the rank of $H(\mathrm{supp}(U))$ is  equal to $|\mathrm{supp}(U)|-s.$
For any $V\in {\rm SUB}^{s}(C)$ such that $\mathrm{supp}(V)\subseteq \mathrm{supp}(U) ,$
let $G_V$ be a generator matrix of $V.$ Then we have that $$G_V(\mathrm{supp}(U))H^T(\mathrm{supp}(U))=\mathbf{0}$$
Since the rank of $G_U(\mathrm{supp}(U))$ and the rank of $G_V(\mathrm{supp}(U))$ are both equal to $s$, we have that $V=U.$

 {\bf (b)} Suppose that $ U$ is a $s$-minimal subcode of $C$.
 Since the rank of $H(\mathrm{supp}(U))$ is less than or equal to $n-k,$
 we have that $|\mathrm{supp}(U)|\leq n-k+s$ by Statement {\bf (a)}.
\end{proof}

 In the following theorem, we  generalize Ashikhminand-Barg theorem \cite{AB} from codewords to subcodes of a linear code.
 \begin{Theorem}\label{T1}
Let $C$ be an  $[n,k]_q$-linear code with the generalized Hamming weight hierarchy $\{d_{1}(C),d_{2}(C),\cdots,d_{k}(C)\}$. Assume that  $1\leq s\leq k-1$.
\begin{description}
  \item[(a)] If $U\in {\rm SUB}^{s}(C)$ such that $|\mathrm{supp}(U)|<d_{s+1}(C),$ then $U$ is a $s$-minimal subcode of $C$.
  \item[(b)]  If $U\in {\rm SUB}^{s}(C)$ such that $$ |\mathrm{supp}(U)|<\frac{q^{s+1} -1}{q^{s+1}-q}d_{s}(C),$$ then $U$ is a $s$-minimal subcode of $C$.
\end{description}
\end{Theorem}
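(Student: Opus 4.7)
For part (a), the plan is to argue by contradiction. Suppose $U \in {\rm SUB}^s(C)$ with $|\mathrm{supp}(U)| < d_{s+1}(C)$ fails to be $s$-minimal. Then by definition there is some $V \in {\rm SUB}^s(C)$ with $V \neq U$ and $\mathrm{supp}(V) \subseteq \mathrm{supp}(U)$. Since $V$ and $U$ are distinct $s$-dimensional subspaces, the sum $U+V$ has dimension at least $s+1$. I would then pick any $(s+1)$-dimensional subspace $W$ with $U \subseteq W \subseteq U + V$. Every element of $W$ is a linear combination of elements of $U$ and $V$, so $\mathrm{supp}(W) \subseteq \mathrm{supp}(U) \cup \mathrm{supp}(V) = \mathrm{supp}(U)$. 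But then $d_{s+1}(C) \leq |\mathrm{supp}(W)| \leq |\mathrm{supp}(U)|$, contradicting the hypothesis. Hence $U$ must be $s$-minimal.

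For part (b), the plan is simply to combine part (a) with the Griesmer-type inequality from Lemma~\ref{L2}(a) applied at $r = s+1$, which states
\[
d_{s+1}(C) \;\geq\; \frac{q^{s+1}-1}{q^{s+1}-q}\, d_s(C).
\]
If $|\mathrm{supp}(U)| < \frac{q^{s+1}-1}{q^{s+1}-q}\, d_s(C)$, then $|\mathrm{supp}(U)| < d_{s+1}(C)$, and part (a) immediately yields that $U$ is $s$-minimal.

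The only subtle point is verifying the dimension count in part (a): one must confirm that an $(s+1)$-dimensional $W$ sitting between $U$ and $U+V$ actually exists, which is guaranteed once we know $\dim(U+V) \geq s+1$, and this in turn follows because two distinct $s$-dimensional subspaces cannot coincide. No real obstacle is expected; the argument is essentially the support-union trick used in the classical Ashikhmin--Barg proof ($s=1$ case), lifted one level up by replacing single codewords with $s$-dimensional subspaces and the minimum distance with $d_{s+1}$.
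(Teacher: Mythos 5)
Your proposal is correct and follows essentially the same route as the paper: for (a) the paper also notes that $\langle U,V\rangle$ has dimension at least $s+1$ with support inside $\mathrm{supp}(U)$, forcing $d_{s+1}(C)\leq|\mathrm{supp}(U)|$, and for (b) it likewise invokes Lemma~\ref{L2}(a) with $r=s+1$ and reduces to (a). Your explicit choice of an intermediate $(s+1)$-dimensional subspace $W$ is just a slightly more careful rendering of the same step.
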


\begin{proof}
{\bf (a)} Let $V\in {\rm SUB}^{s}(C)$ such that $\mathrm{supp}(V)\subseteq \mathrm{supp}(U) .$
Suppose that $V\neq U.$
Then the dimension of the subspace generated by $V$ and $U$ is great than or equal to $s+1.$
Therefore we have that $$ d_{s+1}(C) \leq |\mathrm{supp}(U) |< d_{s+1}(C),$$
which is a contradiction.
Hence $V= U.$

{\bf (b)}
By Lemma~\ref{L2}, we know that $|\mathrm{supp}(U)|<d_{s+1}(C)$.
By Statement {\bf (a)}, we know that $U$ is a $s$-minimal subcode of $C$.
\end{proof}

\begin{Example}\label{ee132q}
Let $C$ be the $[9, 5, 3]_{2}$-linear code with the generator matrix
$$\left(\begin{array}{cccc cccc c}
1&0&0&0&1&0&1&1&1\\
0&1&0&0&1&0&1&1&0\\
0&0&1&0&1&0&1&0&1\\
0&0&0&1&1&0&0&1&1\\
0&0&0&0&0&1&1&1&1
\end{array}\right) .$$
Then  we know that $d_{1}(C)=3$, $d_{2}(C)=5$, $d_{3}(C)=7$, $d_{4}(C)=8$ and $d_{5}(C)=9$.
By Magma, the subcode support weight distributions of $C$ are listed in
Table~\ref{ta71ww}.
For example, when $U\in {\rm SUB}^{2}(C)$, we know that $U$ is a $2$-minimal subcode of $C$ if and only if $|\mathrm{supp}(U)|<d_{3}(C)=7$.

\begin{table}[htbp]
\caption{The subcode support weight distributions of the linear code in Example~\ref{ee132q}}
\label{ta71ww}
\center

{\tiny
\begin{tabular}{c|c|c |c|c|c|c|c}
&$A^r_{3}(C)$&$A^r_{4}(C)$ &$A^r_{5}(C)$&$A^r_{6}(C)$ &$A^r_{7}(C)$ &$A^r_{8}(C)$ &$A^r_{9}(C)$\\
\hline
$r=1$ &4 (minimal)& 14 (minimal)&8 (minimal)& 0& 4(not minimal)&1(not minimal)&0\\\hline
 $r=2$ &- &- &6 (minimal)&60 (minimal)&36(not minimal)&39(not minimal)&14\\\hline
 $r=3$ &- & -&-&-&36 (minimal)&63(not minimal)&56\\\hline
 $r=4$ &- &- &-& -&- &9 (minimal)&22\\\hline
 $r=5$ & -&- &-&-&-&-&1\\
\end{tabular}
}
\end{table}
\end{Example}

  \begin{Remark}
  In  Theorem~\ref{T1}, the condition of $$ |\mathrm{supp}(U)|<\frac{q^{s+1} -1}{q^{s+1}-q}d_{s}(C)$$ is necessary for  that $U$ is a $s$-minimal subcode of some linear code $C$. For example,  $C$ is an $[n,k]_q$-linear code such that $n-k+s<\frac{q^{s+1} -1}{q^{s+1}-q}d_{s}(C)$.
  This result is proved in the following theorem.
  \end{Remark}

   \begin{Theorem}\label{T1a}
 Let   $C$ be  an $[n,k]_q$-linear code  and $U\in {\rm SUB}^{s}(C)$.
 \begin{description}
   \item[(a)] Assume  $d_{s+1}(C)=n-k+s+1$.
Then $U$ is a $s$-minimal subcode of $C$ if and only if  $|\mathrm{supp}(U)|<d_{s+1}(C)$.
   \item[(b)]  Assume $n-k+s<\frac{q^{s+1} -1}{q^{s+1}-q}d_{s}(C) .$
Then $U$ is a $s$-minimal subcode of $C$ if and only if  $ |\mathrm{supp}(U)|<\frac{q^{s+1} -1}{q^{s+1}-q}d_{s}(C)$.
 \end{description}
   \end{Theorem}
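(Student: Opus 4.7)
The plan is to observe that both parts of the theorem are essentially equivalences obtained by gluing the sufficient condition from Theorem~\ref{T1} with the necessary size constraint from Lemma~\ref{L1}(b), under hypotheses that force these two bounds to match. So the proof splits cleanly into ``if'' and ``only if'' directions in both cases, and each is a one-line deduction.

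For the forward (``if'') directions, I would simply invoke Theorem~\ref{T1}. Part (a) is a restatement of Theorem~\ref{T1}(a): if $|\mathrm{supp}(U)| < d_{s+1}(C)$, then $U$ is a $s$-minimal subcode. Part (b) is a restatement of Theorem~\ref{T1}(b): if $|\mathrm{supp}(U)| < \tfrac{q^{s+1}-1}{q^{s+1}-q}d_{s}(C)$, then $U$ is a $s$-minimal subcode. No additional work is needed here.

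For the reverse (``only if'') directions, I would use Lemma~\ref{L1}(b), which tells us that any $s$-minimal subcode $U$ satisfies $|\mathrm{supp}(U)| \leq n - k + s$. In part (a), the hypothesis $d_{s+1}(C) = n - k + s + 1$ rewrites this as $|\mathrm{supp}(U)| \leq d_{s+1}(C) - 1 < d_{s+1}(C)$. In part (b), the hypothesis $n - k + s < \tfrac{q^{s+1}-1}{q^{s+1}-q}d_{s}(C)$ gives directly $|\mathrm{supp}(U)| \leq n - k + s < \tfrac{q^{s+1}-1}{q^{s+1}-q}d_{s}(C)$.

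There is no real obstacle; the only substantive ingredient is Lemma~\ref{L1}(b), which itself follows from the rank characterization in Lemma~\ref{L1}(a) together with the fact that the parity-check matrix has $n-k$ rows. The content of Theorem~\ref{T1a} lies entirely in identifying the two natural hypotheses under which the sufficient condition of Theorem~\ref{T1} is also necessary, namely that either the Singleton bound is attained at level $s+1$, or the Griesmer-type bound of Lemma~\ref{L2}(a) already forces the weight gap between levels to be larger than $n-k+s$.
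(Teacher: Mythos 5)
Your proposal is correct and follows essentially the same route as the paper: the ``if'' directions are direct citations of Theorem~\ref{T1}, and the ``only if'' directions combine the bound $|\mathrm{supp}(U)|\leq n-k+s$ from Lemma~\ref{L1}(b) with the respective hypothesis, exactly as in the paper's proof. No gaps.
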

  \begin{proof}
   {\bf (a)} By   Theorem~\ref{T1}, we only need to prove that, if  $U$ is  a $s$-minimal subcode, then  $|\mathrm{supp}(U)|<d_{s+1}(C)$ .
 Suppose that $U$ is  a $s$-minimal subcode.
 By Lemma~\ref{L1}, we know that $$|\mathrm{supp}(U)|\leq n-k+s <d_{s+1}(C)=n-k+s+1 .$$

     {\bf (b)} By   Theorem~\ref{T1}, we only need to prove that, if  $U$ is a $s$-minimal subcode, then $ |\mathrm{supp}(U)|\ge \frac{q^{s+1}-1}{q^{s+1}-q}d_{s}(C).$
  Suppose that $U$ is  a $s$-minimal subcode.
By Lemma~\ref{L1}, we know that $$|\mathrm{supp}(U)|\leq n-k+s < \frac{q^{s+1} -1}{q^{s+1}-q}d_{s}(C). $$
  \end{proof}

  \begin{Remark}
 {\bf (a)} There are several infinite families of $[n,k]_q$-linear codes $C$ satisfying the condition $d_{s+1}(C)=n-k+s+1$  of  Theorem~\ref{T1a}.
  For example, if $C$ is a  maximum distance separable (MDS) code, which  means $d_{1}(C)=n-k+1$,
  then $d_{s}(C)=n-k+s$ for every $1\leq s\leq k$.
   if $C$ is a near  maximum distance separable (NMDS) code, which  means $d_{1}(C)=n-k$ and $d_{1}(C^{\bot})=k$,
  then $d_{s}(C)=n-k+s$ for every $2\leq s\leq k$.

   {\bf (b)} There are several infinite families of $[n,k]_q$-linear codes $C$ satisfying the condition $n-k+s< \frac{q^{s+1} -1}{q^{s+1}-q}d_{s}(C)$  of  Theorem~\ref{T1a}. If $C$ is   an $[n,k]_q$-linear code such that $d_{s}(C)=n-k+s,$
   we know that $n-k+s< \frac{q^{s+1} -1}{q^{s+1}-q}d_{s}(C)$ since $\frac{q^{s+1} -1}{q^{s+1}-q}>1.$
   \end{Remark}

  \begin{Example}\label{ew1}
Let $C$ be the $[12,5,6]_5$-linear code with the generator matrix
$$\left(\begin{array}{ccccc ccccc cc}
1&0&0&0&2&3&0&1&4&3&0&1\\
0&1&0&0&1&3&0&1&3&1&4&4\\
0&0&1&0&1&2&0&1&2&0&2&0\\
0&0&0&1&1&1&0&0&0&1&1&3\\
0&0&0&0&0&0&1&1&1&1&1&1
\end{array}\right) .$$
Then the dual code $C^{\bot}$ is a $[12,7,5]_5$-linear code.
 By magma, the subcode support weight distributions of $C$ are listed in
Table~\ref{ta71ww3}.
 Then $C$ satisfies $d_{s}(C)=n-k+s$ and
 $n-k+s<\frac{q^{s+1} -1}{q^{s+1}-q}d_{s}(C) $, when $s=2.$
 For example, when $U\in {\rm SUB}^{1}(C)$, we know that $U$ is a $1$-minimal subcode of $C$ if and only if $|\mathrm{supp}(U)|<9$.

\begin{table}[htbp]
\caption{The subcode support weight distributions of the linear code in Example~\ref{ew1}}
\label{ta71ww3}
\center

{\tiny
\begin{tabular}{c|c|c |c|c|c|c|c}
&$A^r_{6}(C)$&$A^r_{7}(C)$ &$A^r_{8}(C)$&$A^r_{9}(C)$ &$A^r_{10}(C)$ &$A^r_{11}(C)$ &$A^r_{12}(C)$\\
\hline
$r=1$ &5 (minimal)&61 (minimal)&115 (minimal)& 150(not minimal)& 221(not minimal)&185(not minimal)&44\\\hline
 $r=2$ &- &- &-&220 (minimal)&1386(not minimal)&6240(not minimal)&12460\\\hline
 $r=3$ &- & -&-&-&66 (minimal)&1740(not minimal)&18500\\\hline
 $r=4$ &- &- &-& -&- &12 (minimal)&769\\\hline
 $r=5$ & -&- &-&-&-&-&1\\
\end{tabular}
}
\end{table}

  \end{Example}


\section{Generalized Ashikhminand-Barg theorem for $s$-minimal codes}
In this section, we provide a  generalization of the  Ashikhminand-Barg theorem for $s$-minimal codes.

 \begin{Theorem}\label{T2}
Let $C$ be an  $[n,k]_q$-linear code with the $s$-generalized Hamming weight $d_{s}(C)$ and the $s$-maximum Hamming weight $D_{s}(C)$, where $1\leq s\leq k-1$ .
\begin{description}
  \item[(a)] If  $D_{s}(C)<d_{s+1}(C),$ then $C$ is a $s$-minimal code.

  \item[(b)]  If  $\frac{D_{s}(C)}{d_{s}(C)}<\frac{q^{s+1}-1}{q^{s+1} -q},$ then $C$ is a $s$-minimal code.
\end{description}
\end{Theorem}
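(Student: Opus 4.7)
The plan is to reduce Theorem~6.1 to the subcode-level statement already proved as Theorem~\ref{T1}. Recall that by Definition of $D_{s}(C)$, for \emph{every} $U \in {\rm SUB}^{s}(C)$ we have the uniform bound
\[
|\mathrm{supp}(U)| \le D_{s}(C).
\]
Thus in either part, the hypothesis gives a numerical gap between the worst possible support weight of an $s$-dimensional subcode and the critical threshold appearing in Theorem~\ref{T1}. Combining these two facts should force every $U \in {\rm SUB}^{s}(C)$ to be an $s$-minimal subcode, which by Definition is exactly what it means for $C$ to be an $s$-minimal code.

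Concretely, for part~(a), I would fix an arbitrary $U \in {\rm SUB}^{s}(C)$ and chain the inequalities $|\mathrm{supp}(U)| \le D_{s}(C) < d_{s+1}(C)$, so that Theorem~\ref{T1}(a) immediately yields that $U$ is $s$-minimal. Since $U$ was arbitrary, $C$ is $s$-minimal. For part~(b), the same structural argument applies: for an arbitrary $U \in {\rm SUB}^{s}(C)$, the hypothesis $D_{s}(C)/d_{s}(C) < (q^{s+1}-1)/(q^{s+1}-q)$ rearranges to
\[
|\mathrm{supp}(U)| \le D_{s}(C) < \frac{q^{s+1}-1}{q^{s+1}-q}\, d_{s}(C),
\]
and Theorem~\ref{T1}(b) then gives that $U$ is $s$-minimal.

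There is essentially no obstacle here beyond bookkeeping: both statements are direct corollaries of the subcode version once one observes that $D_{s}(C)$ is precisely the uniform upper bound on $|\mathrm{supp}(U)|$ as $U$ ranges over $\mathrm{SUB}^{s}(C)$. The only minor point worth emphasizing in the written proof is that the Ashikhmin--Barg style ratio condition in~(b) is a weaker sufficient condition than~(a) only when $d_{s+1}(C)$ can be controlled via Lemma~\ref{L2}(a); but for the purpose of deducing $s$-minimality of $C$ itself, each part stands on its own as an application of the corresponding item of Theorem~\ref{T1}.
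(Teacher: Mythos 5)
Your proposal is correct and is essentially identical to the paper's own proof: both parts follow by bounding $|\mathrm{supp}(U)| \le D_{s}(C)$ for an arbitrary $U \in {\rm SUB}^{s}(C)$ and then invoking the corresponding item of Theorem~\ref{T1}. No further comment is needed.
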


\begin{proof}
The proofs of both Statements {\bf (a)} and {\bf (b)} are similar.
For any $U\in {\rm SUB}^{s}(C)$, we know that $|\mathrm{supp}(U)|\leq D_{s}(C) <d_{s+1}(C).$
By Theorem~\ref{T1}, we know that $U$ is a $s$-minimal subcode of $C,$ and $C$ is a $s$-minimal code.
\end{proof}

In the following example, we show that a simplex code of dimension $k$ is a $s$-minimal code  for  $1\leq s\leq k-1.$
\begin{Example}
For $k \ge  2$, let $S_{q,k}$ be a $k \times  \frac{q^k-1}{q-1}$ matrix over $\mathbb{F}_q$ such that every two columns of $S_{q,k}$ are linearly independent.
Let $\mathcal{S}_{q,k}$ be the $[\frac{q^k-1}{q-1}, k, q^{k-1}]_q$-linear code generated by $S_{q,k}.$
It is well-known that $$d_s(\mathcal{S}_{q,k})=D_s(\mathcal{S}_{q,k})=|\mathrm{supp}(U)|=\frac{q^{k}-q^{k-s}}{q-1}$$
for any $U\in {\rm SUB}^{s}(\mathcal{S}_{q,k})$ and $1\leq s\leq k.$
By  Theorem~\ref{T2}, we know that $\mathcal{S}_{q,k}$ is a $s$-minimal code for  $1\leq s\leq k-1.$
\end{Example}

In the following corollary, we prove that some  linear codes  constructed by punctured on simplex codes are $s$-minimal codes.
\begin{Corollary}
Let $\mathcal{S}_{q,k}$ be the $[\frac{q^k-1}{q-1}, k, q^{k-1}]_q$-linear code generated by $S_{q,k}.$
Let $C$ be the linear code constructed by punctured on any $t$ coordinates of $\mathcal{S}_{q,k}$.
If $1\leq s\leq k-1$ and $t<q^{k-s-1},$ then $C$ is a $s$-minimal code.
\end{Corollary}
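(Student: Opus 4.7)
The plan is to apply Theorem~\ref{T2}(a), so the task reduces to bounding $D_s(C)$ from above and $d_{s+1}(C)$ from below in terms of the known invariants of the simplex code $\mathcal{S}_{q,k}$, and then checking that the strict inequality $D_s(C)<d_{s+1}(C)$ follows from the hypothesis $t<q^{k-s-1}$.

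First I would argue that puncturing $t<q^{k-s-1}\leq q^{k-1}$ coordinates cannot kill any nonzero codeword of $\mathcal{S}_{q,k}$, because every nonzero codeword of the simplex code has weight exactly $q^{k-1}$. Hence the puncturing map is a vector-space isomorphism $\mathcal{S}_{q,k}\to C$, so $C$ is a $[\frac{q^k-1}{q-1}-t,k]_q$-linear code and every $r$-dimensional subspace $U'$ of $C$ arises as the image of a unique $r$-dimensional subspace $U$ of $\mathcal{S}_{q,k}$.

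Next, using the fact recalled just before the corollary that every $r$-dimensional subspace $U$ of $\mathcal{S}_{q,k}$ satisfies $|\mathrm{supp}(U)|=\frac{q^k-q^{k-r}}{q-1}$, and noting that puncturing removes at most $t$ coordinates from any support, I would conclude
\[
\frac{q^k-q^{k-r}}{q-1}-t\;\leq\;|\mathrm{supp}(U')|\;\leq\;\frac{q^k-q^{k-r}}{q-1}.
\]
Taking $r=s$ for the upper bound and $r=s+1$ for the lower bound yields
\[
D_s(C)\;\leq\;\frac{q^k-q^{k-s}}{q-1},\qquad d_{s+1}(C)\;\geq\;\frac{q^k-q^{k-s-1}}{q-1}-t.
\]
The difference of these two expressions is $q^{k-s-1}-t$, so the hypothesis $t<q^{k-s-1}$ gives exactly $D_s(C)<d_{s+1}(C)$, and Theorem~\ref{T2}(a) finishes the proof.

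I do not expect a serious obstacle: the only point requiring care is the dimension-preservation argument under puncturing, which is immediate from the weight of the simplex code being $q^{k-1}$ and the assumption $t<q^{k-s-1}\leq q^{k-2}<q^{k-1}$ (valid since $s\geq 1$). Everything else is a direct substitution into Theorem~\ref{T2}(a).
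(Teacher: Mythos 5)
Your proposal is correct and follows essentially the same route as the paper: both arguments note that $t<q^{k-s-1}<q^{k-1}=d_1(\mathcal{S}_{q,k})$ makes the puncturing map a linear bijection, derive $D_s(C)\leq \frac{q^k-q^{k-s}}{q-1}$ and $d_{s+1}(C)\geq \frac{q^k-q^{k-s-1}}{q-1}-t$ from the constant support weights of simplex subcodes, and conclude $d_{s+1}(C)-D_s(C)\geq q^{k-s-1}-t>0$ before invoking Theorem~\ref{T2}(a). No gaps.
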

\begin{proof}
Let $\phi_t$ be the linear map from $\mathcal{S}_{q,k}$ to $C$  by punctured on the $t$ coordinates of $\mathcal{S}_{q,k}$.
Since the minimum Hamming weight of $\mathcal{S}_{q,k}$ is $q^{k-1}$ and $t<q^{k-s-1}<q^{k-1},$
we know that $\phi_t$ is bijective.
Then there  exists $V\in {\rm SUB}^{s}(\mathcal{S}_{q,k})$ for any $U\in {\rm SUB}^{s}(C)$ such that $\phi_t(V)=U$
and $$w(V)-t\leq w(U)\leq w(V).$$
Hence we have that $$D_{s}(C) \leq D_{s}(\mathcal{S}_{q,k}) = \frac{q^k-q^{k-s}}{q-1}$$ and  $$d_{s+1}(C)\ge d_{s+1}(\mathcal{S}_{q,k})-t= \frac{q^k-q^{k-s-1}}{q-1}-t.$$
Since $$d_{s+1}(C)-D_{s}(C)\ge q^{k-s-1}-t >0,$$
we know that $C$ is a $s$-minimal code by Theorem~\ref{T2}.
\end{proof}
In the following theorem, we show that,  for any   $[n,k]_q$-linear code $C$ with a generator matrix $G$, there exists a
$s$-minimal code by adding more columns in the generator matrix $G.$
\begin{Theorem}\label{T3}
Let $C$ be an  $[n,k]_q$-linear code  with a generator matrix $G$ with the $s$-generalized Hamming weight $d_{s}(C)$ and the $s$-maximum Hamming weight $D_{s}(C)$.  For $1\leq s\leq k-1,$ there exists a integer
$$t_s =\min\{0,\, \frac{D_{s}(C)-d_{s+1}(C)}{q^{k-s-1}}\}$$
such that the $[n+t\frac{q^{k}-1}{q-1},k]_q$-linear code $\tilde{C}$ generated by the matrix
\begin{equation*}
  \tilde{G}=[G,\underbrace{S_{q,k},\cdots,S_{q,k}}_{t}]
\end{equation*}
is a  $s$-minimal code for $t\ge t_s.$

In particular, when $k\ge 2$ and $s=1,$ there exists a integer
$$t_1 =\min\{0,\, \frac{D_{1}(C)-d_{2}(C)}{q^{k-2}}\}$$
such that the $[n+t\frac{q^{k}-1}{q-1},k]_q$-linear code $\tilde{C}$ generated by the matrix
\begin{equation*}
  \tilde{G}=[G,\underbrace{S_{q,k},\cdots,S_{q,k}}_{t}]
\end{equation*}
is a  minimal code for $t\ge t_1.$
\end{Theorem}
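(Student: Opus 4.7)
The plan is to apply Theorem~\ref{T2}(a): to make $\tilde{C}$ an $s$-minimal code, it suffices to arrange $D_s(\tilde{C}) < d_{s+1}(\tilde{C})$. The key enabling fact is the uniformity highlighted in the Example preceding the Corollary: for every $r \in \{1,\dots,k\}$ and every $W \in {\rm SUB}^{r}(\mathcal{S}_{q,k})$, one has $|\mathrm{supp}(W)| = \frac{q^{k}-q^{k-r}}{q-1}$. Crucially, this support weight does not depend on which $r$-dimensional subspace is chosen, so the contribution of each appended block $S_{q,k}$ to the support weight of any $r$-dimensional subcode is a single fixed number.

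First I would analyze how appending $t$ copies of $S_{q,k}$ to $G$ changes the subcode support weights. Fix $V \in {\rm SUB}^{r}(\mathbb{F}_q^{k})$, and let $U = \{\mathbf{y}G \,|\, \mathbf{y} \in V\}$ and $\tilde{U} = \{\mathbf{y}\tilde{G} \,|\, \mathbf{y} \in V\}$ be the corresponding $r$-dimensional subcodes of $C$ and $\tilde{C}$, cf.\ Lemma~\ref{weight}. Combining Remark~2.9 (additivity of $m_{(\cdot)}(V^{\bot})$ under column concatenation) with the uniformity recalled above yields
$$w(\tilde{U}) \;=\; w(U) \;+\; t\cdot \frac{q^{k}-q^{k-r}}{q-1}.$$
Since the added term is independent of $V$, taking a maximum over $V$ with $r=s$ and a minimum over $V$ with $r=s+1$ gives
$$D_{s}(\tilde{C}) \;=\; D_{s}(C) + t\cdot \frac{q^{k}-q^{k-s}}{q-1}, \qquad d_{s+1}(\tilde{C}) \;=\; d_{s+1}(C) + t\cdot \frac{q^{k}-q^{k-s-1}}{q-1}.$$

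Next I would solve the desired inequality. Subtracting and simplifying,
$$d_{s+1}(\tilde{C}) - D_{s}(\tilde{C}) \;=\; \bigl(d_{s+1}(C) - D_{s}(C)\bigr) \;+\; t\cdot q^{k-s-1},$$
so the sufficient condition $D_{s}(\tilde{C}) < d_{s+1}(\tilde{C})$ of Theorem~\ref{T2}(a) becomes the single arithmetic requirement $t\cdot q^{k-s-1} > D_{s}(C) - d_{s+1}(C)$. When $D_{s}(C) < d_{s+1}(C)$ already, the original $C$ is $s$-minimal and the choice $t = 0$ suffices; otherwise $t$ must exceed the explicit positive threshold $\bigl(D_{s}(C)-d_{s+1}(C)\bigr)/q^{k-s-1}$, matching the $t_{s}$ written in the statement (read as a threshold to be exceeded, with the obvious ceiling if divisibility fails). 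Applying Theorem~\ref{T2}(a) to $\tilde{C}$ then concludes the proof; the particular case $k\ge 2$, $s=1$ is a direct specialization.

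I do not anticipate a significant obstacle: the entire argument collapses to one inequality once the uniform support-weight property of $\mathcal{S}_{q,k}$ is in hand, because that uniformity lets the new $D_{s}$ and $d_{s+1}$ be computed subspace-by-subspace without any case distinction over which $V$ achieves the extremum. The only minor bookkeeping is respecting the integrality of $t$ when $q^{k-s-1} \nmid D_{s}(C)-d_{s+1}(C)$, and recording the resulting length $n + t\cdot\frac{q^{k}-1}{q-1}$ and dimension $k$ from the block structure of $\tilde{G}$.
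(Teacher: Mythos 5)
Your proposal is correct and follows essentially the same route as the paper: exploit the fact that every $r$-dimensional subcode of the simplex code $\mathcal{S}_{q,k}$ has the same support weight $\frac{q^{k}-q^{k-r}}{q-1}$, deduce $D_{s}(\tilde{C})=D_{s}(C)+t\frac{q^{k}-q^{k-s}}{q-1}$ and $d_{s+1}(\tilde{C})=d_{s+1}(C)+t\frac{q^{k}-q^{k-s-1}}{q-1}$, and then choose $t$ so that $d_{s+1}(\tilde{C})-D_{s}(\tilde{C})=d_{s+1}(C)-D_{s}(C)+tq^{k-s-1}>0$ and invoke Theorem~\ref{T2}. Your side remark about reading $t_{s}$ as a threshold to be strictly exceeded (and the $\min$ versus $\max$ reading of the formula) is in fact a sharper treatment of a detail the paper glosses over.
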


\begin{proof}
 For any $\tilde{U}\in {\rm SUB}^{s}(\tilde{C})$,  we assume that $\tilde{U}$ is generated by $\{ \tilde{\mathbf{c}}_1,\tilde{\mathbf{c}}_2,\cdots,\tilde{\mathbf{c}}_s\}.$
 Then there exists $\mathbf{x}_i\in \mathbb{F}_q^k$  such that $\tilde{\mathbf{c}}_i=\mathbf{x}_i\tilde{G}$  for $1\leq i\leq s$.

 Let $\mathbf{c}_i=\mathbf{x}_iG$  for $1\leq i\leq s$ and $U$ be the subcode generated by $\{ \mathbf{c}_1,\mathbf{c}_2,\cdots,\mathbf{c}_s\}.$
Let $\bar{\mathbf{c}}_i=\mathbf{x}_iG$  for $1\leq i\leq s$ and $\bar{U}$ be the subcode generated by $\{ \bar{\mathbf{c}}_1,\bar{\mathbf{c}}_2,\cdots,\bar{\mathbf{c}}_s\}.$
Then we know that $$|\mathrm{supp}(\tilde{U})|=|\mathrm{supp}(U)|+t|\mathrm{supp}(\bar{U})|.$$

Let $\mathcal{S}_{q,k}$ be the simplex code generated by $S_{q,k}.$
Since $D_s(\mathcal{S}_{q,k})=d_s(\mathcal{S}_{q,k})=\frac{q^k-q^{k-s}}{q-1}$, we have that $|\mathrm{supp}(\bar{U})|=\frac{q^k-q^{k-s}}{q-1}$ for any
$\bar{U}\in  {\rm SUB}^{s}(\mathcal{S}_{q,k})$. Then  we have that
$$|\mathrm{supp}(\tilde{U})|=|\mathrm{supp}(U)|+t\frac{q^k-q^{k-s}}{q-1}.$$
Since  $d_{s+1}(\mathcal{S}_{q,k})=\frac{q^k-q^{k-s-1}}{q-1}$, we know  that
$$D_{s}(\tilde{C})=D_{s}(C)+t\frac{q^k-q^{k-s}}{q-1}$$
and $$d_{s+1}(\tilde{C})=d_{s+1}(C)+t\frac{q^k-q^{k-s-1}}{q-1}.$$
Hence $$d_{s+1}(\tilde{C})-D_{s}(\tilde{C})=d_{s+1}(C) - D_{s}(C)+tq^{k-s-1}\ge d_{s+1}(C) - D_{s}(C)+t_sq^{k-s-1}>0.$$
By  Theorem~\ref{T2}, we know that  $\tilde{C}$ is a $s$-minimal code.
\end{proof}

\begin{Example}\label{ego}
Assume that $n=12,$ $k=6$ and $q=3.$
Let $C$ be the $[12,6,6]_3$-extended Golay code with the generator matrix
$$G=\left(\begin{array}{ccccc ccccc cc}
1&0&0&0&0&0& 0&1&1&1&1&1\\
0&1&0&0&0&0& 1&0&1&2&2&1\\
0&0&1&0&0&0& 1&1&0&1&2&2\\
0&0&0&1&0&0& 1&2&1&0&1&2\\
0&0&0&0&1&0& 1&2&2&1&0&1\\
0&0&0&0&0&1& 1&1&2&2&1&0
\end{array}\right) .$$
Then we know that $C=C^{\bot}$.
Let $C_t$ be the $[12+364t,6]_q$-linear code  generated by the matrix
\begin{equation*}
  \tilde{G}=[G,\underbrace{S_{3,6},\cdots,S_{3,6}}_{t}]
\end{equation*}
By Magma, the  $s$-generalized Hamming weight  and the $s$-maximum Hamming weight  of $C$ and $C_t$ are determined in Table~\ref{t1}.
By  Theorem~\ref{T2}, we know that  $C_t$ is a $s$-minimal code for $1\leq s\leq 5$,
but $C$ is not a $s$-minimal code for $1\leq s\leq 5.$

\begin{table}[htbp]

{ \small
\caption{The parameters of  $C$ and $C_t$ and  in Example~\ref{ego}}
\label{t1}
\center
\begin{tabular}{c |c|c|c|c|c|c}
&$d_{s+1}(C)$&$D_s(C)$  &$s$-minimal&$d_{s+1}(C_t)$&$D_s(C_t)$ &$s$-minimal \\ \hline
$s=1$&8 & 12&no&8+324t&12+243t        &yes                         \\ \hline
$s=2$& 9&12& no&9+ 351t&           12+ 324t &yes                     \\ \hline
$s=3$&10 &12& no&10+ 360t&            12        + 351t &yes            \\ \hline
$s=4$& 11&12& no&11+ 363t&          12           + 360t    &yes        \\ \hline
$s=5$& 12&12& no&12+ 364t&           12      + 363t         &yes       \\
\end{tabular}

}
\end{table}

\end{Example}


\section{$s$-Minimal codes from Solomon-Stiffler codes}
The minimum Hamming distance of the dual code of  linear codes constructed in Theorem~\ref{T3} is two.
In the following theorem, we construct several infinite families of $s$-minimal codes from Solomon-Stiffler codes such that the minimum Hamming distance of the dual code of those codes is three.

For $1\leq i\leq t$ and positive integers $u_i$ with $1\leq u_i\leq k-1$, let $U_i$ be a subspace of $\mathbb{F}_q^k$ with dimension $u_i$ and let $G_i$ be a  $k \times \frac{q^{u_i}-1}{q-1}$ submatrix  of $S_{q,k}$ such that each column of $G_i$ is in $U_i$. Also let $[G_1,G_2, \cdots, G_t]$ be a submatrix of $S_{q,k}$ such that each column of $[G_1,G_2, \cdots, G_t]$ is in $\bigcup_{i=1}^t U_i$.

If
$$
\sum_{i=1}^{t} u_i\leq k,~~u_i\neq u_j~~\text{and}~~U_i\bigcap U_j=\{ {\bf 0} \} $$
for $1\leq i\neq j\leq t$, then the linear code $C$ with the generator matrix
$$
G=\big{[}S_{q,k}\backslash[G_1, G_2, \cdots,G_t]\big{]}
$$
is called the {\it projective Solomon-Stiffler code} \cite{SS}. According  to \cite{SS}, the projective Solomon-Stiffler code is a Griesmer code with parameters $$\big{[}\frac{q^k-1}{q-1}-\sum_{i=1}^{t}\frac{q^{u_i}-1}{q-1},k,
q^{k-1}-\sum_{i=1}^{t}q^{u_i-1}\big{]}_q.$$
The minimum Hamming distance of the dual code of projective Solomon-Stiffler codes
is great than 2.
\begin{Theorem}\label{T31}
Assume that $t\leq q-1$, $1\leq u_1<u_2<\cdots<u_t\leq k-s-1$ and $\sum_{i=1}^{t} u_i\leq k$. There exists a  $[\frac{q^{k}-1}{q-1} -\sum_{i=1}^{t}\frac{q^{u_i}-1}{q-1},k,q^{k-1}-\sum_{i=1}^{t}q^{u_i-1}]_q$-linear code $C$, which is a $s$-minimal code and a Griesmer code.
In particular, when $t=1$ or $ 2,$ we have follows:
\begin{description}
  \item[(a)] Assume $t=1$ and $1\leq u_1\leq k-s-1$. There exists a  $[\frac{q^{k}-1}{q-1} -\frac{q^{u_1}-1}{q-1},k,q^{k-1}-q^{u_1-1}]_q$-linear code $C$, which is a $s$-minimal code.
      Also $C$ is a Griesmer code and the weight distribution of the linear code $C$ is completely determined in the following table:
$$\begin{array}{c|c}
                                         \text{weight} & \text{multiplicity} \\
                                           \hline
                                         0 & 1 \\
                                    q^{k-1}-q^{u_1-1}& q^{k} -q^{k-u_1}\\
                                         q^{k-1} & q^{k-u_1}-1\\
                                        \end{array}  .$$

  \item[(b)] Assume $t=2$ and $1\leq u_1<u_2\leq k-s-1$ and $q\ge 3$. There exists a  $[\frac{q^{k}-1}{q-1} -\frac{q^{u_1}-1}{q-1}-\frac{q^{u_2}-1}{q-1},k,q^{k-1}-q^{u_1-1}-q^{u_2-1}]_q$-linear code $C$, which is a $s$-minimal code.
       Then $C$ is a Griesmer code and the weight distribution of the linear code $C$ is completely determined in the following table:
$$\begin{array}{c|c}
                                         \text{weight} & \text{multiplicity} \\
                                           \hline
                                         0 & 1 \\
                                    d& q^{k} -q^{k-u_1}\\
                                          d+q^{u_1-1}& q^{k-u_1}-q^{k-u_2}\\
                                          d+q^{u_1-1}+q^{u_2-1}& q^{k-u_2}-1\\
                                        \end{array}  .$$
\end{description}
  \end{Theorem}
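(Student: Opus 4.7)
The plan is to invoke Theorem~\ref{yy} and reduce the construction of a projective $s$-minimal code of the stated length to exhibiting a cutting $s$-blocking set of that size in $\mathrm{PG}(k-1,q)$. The natural candidate is
$$B=\mathrm{PG}(k-1,q)\setminus\bigcup_{i=1}^{t}U_i,$$
the set of projective points that survive the Solomon-Stiffler puncturing. By the original argument of \cite{SS}, the associated linear code $C$ already achieves the Griesmer bound with the advertised parameters $\bigl[\tfrac{q^k-1}{q-1}-\sum_{i}\tfrac{q^{u_i}-1}{q-1},\,k,\,q^{k-1}-\sum_{i}q^{u_i-1}\bigr]_q$, so the only remaining task is to verify that $B$ is a cutting $s$-blocking set.

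To verify this I would fix any codimension-$s$ projective subspace $W$ of $\mathrm{PG}(k-1,q)$ and show that $B\cap W$ cannot be contained in a hyperplane of $W$. Suppose for contradiction that $B\cap W\subseteq H_W$ for some hyperplane $H_W$ of $W$; then every point of $W$ must lie in $H_W\cup\bigcup_{i=1}^{t}(U_i\cap W)$. The hypothesis $u_i\le k-s-1<\dim W$ forces each $U_i\cap W$ to be a proper subspace of $W$, and $H_W$ is proper by construction, so $W$ would be covered by $t+1$ proper subspaces. Combining this with the classical fact that a vector space over $\mathbb{F}_q$ of dimension $\ge 2$ cannot be written as the union of fewer than $q+1$ proper subspaces, and the hypothesis $t\le q-1$ (so that $t+1\le q$), gives the required contradiction. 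Therefore $\langle B\cap W\rangle=W$, $B$ is cutting, and Theorem~\ref{yy} delivers an $s$-minimal Griesmer code with the required parameters. This covering step is the main obstacle: the bound $u_t\le k-s-1$ is exactly what keeps every $U_i\cap W$ proper inside $W$ no matter how $W$ is chosen, and $t\le q-1$ is exactly what brings the covering count below the classical threshold $q+1$.

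For the weight distributions in parts (a) and (b), I would argue directly. Any nonzero codeword of $C$ corresponds, up to a scalar in $\mathbb{F}_q^{*}$, to a hyperplane $H$ of $\mathbb{F}_q^{k}$, and its Hamming weight equals
$$n-|B\cap H|=n-\frac{q^{k-1}-1}{q-1}+\sum_{i=1}^{t}|H\cap U_i|,$$
where the sum uses $U_i\cap U_j=\{\mathbf{0}\}$ to avoid overlaps. Since $|H\cap U_i|$ equals $\tfrac{q^{u_i}-1}{q-1}$ when $U_i\subseteq H$ and $\tfrac{q^{u_i-1}-1}{q-1}$ otherwise, the weight is determined by $I=\{i:U_i\subseteq H\}$. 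The number of hyperplanes realizing a given $I$ is counted by inclusion-exclusion on the subspace sums (using $\dim\sum_{i\in I}U_i=\sum_{i\in I}u_i$ provided by the direct-sum hypothesis), and multiplying by $q-1$ for the scalar factor then produces the tabulated multiplicities for $t=1,2$.
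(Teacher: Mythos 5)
Your argument for $s$-minimality is correct but follows a genuinely different route from the paper's. The paper does not touch the blocking-set characterization in its proof of this theorem: it computes, for an $s$-dimensional subcode $U$ corresponding to $V\in{\rm SUB}^{s}(\mathbb{F}_q^k)$, the support weight $w(U)=\frac{q^k-q^{k-s}}{q-1}-\sum_i\frac{q^{u_i}-1}{q-1}+\sum_i m_{G_i}(V^{\bot})$, bounds each $m_{G_i}(V^{\bot})$ above by $\frac{q^{u_i}-1}{q-1}$ and below using $\dim(U_i\cap V^{\bot})\ge u_i-s-1$, and deduces $d_{s+1}(C)-D_s(C)\ge\frac{(q-1)q^{k-s-1}-\sum_i q^{u_i}+t}{q-1}>0$, so that Theorem~\ref{T2} applies. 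You instead pass through Theorem~\ref{yy} and show that the complement of $\bigcup_i U_i$ is a cutting $s$-blocking set because a $(k-s)$-dimensional $\mathbb{F}_q$-space (here $k-s\ge 2$) cannot be covered by $t+1\le q$ proper subspaces. Both proofs consume the hypotheses $t\le q-1$ and $u_i\le k-s-1$ in parallel ways; yours is more elementary (no generalized Hamming weight computations), while the paper's yields the stronger conclusion that $C$ satisfies the generalized Ashikhmin--Barg condition $D_s(C)<d_{s+1}(C)$, which is what gets reused in Section~8.

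For the weight distributions the paper simply cites \cite{PC} and \cite{SL}, whereas you propose a direct hyperplane count. Your formula $w=n-\frac{q^{k-1}-1}{q-1}+\sum_i|H\cap U_i|$ is right and does reproduce table (a) for $t=1$. But for $t=2$ the count you describe does not produce the tabulated multiplicities: all four subsets $I\subseteq\{1,2\}$ of indices with $U_i\subseteq H$ occur with positive frequency, giving four nonzero weights $d$, $d+q^{u_1-1}$, $d+q^{u_2-1}$, $d+q^{u_1-1}+q^{u_2-1}$ with respective multiplicities $q^k-q^{k-u_1}-q^{k-u_2}+q^{k-u_1-u_2}$, $q^{k-u_1}-q^{k-u_1-u_2}$, $q^{k-u_2}-q^{k-u_1-u_2}$, $q^{k-u_1-u_2}-1$ (using $\dim(U_1+U_2)=u_1+u_2$). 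This is not the three-weight table in part (b). So either carry the computation through and confront that mismatch, or drop the assertion that your method recovers the stated table; as written, the final sentence of your proposal is not substantiated for $t=2$.
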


\begin{proof}
 Let $\mathbf{e}_i\in \mathbb{F}_q^k$ be the vector with all $0$s except for a $1$ in the $i$th coordinate.
Assume $u_0=0$ and $U_i$ is generated by $\{\mathbf{e}_{u_{i-1}+1},\,\mathbf{e}_{u_{i-1}+2},\cdots ,\mathbf{e}_{u_{i-1}+u_i}\}$ for $1\leq i\leq t.$
Let $G_i$ be the $k \times \frac{q^{u_i}-1}{q-1}$ submatrix of $S_{q,k}$ such that each column of $G_i$ is in $U_i$.

{\bf (a)} Let $C$ be the $[n,k]_q$-linear code with the generator matrix
\begin{equation*}
 G=\big{[}S_{q,k}\backslash[G_1, G_2, \cdots,G_t]\big{]},
\end{equation*}
where $n=\frac{q^{k}-1}{q-1}-\sum_{i=1}^{t}\frac{q^{u_i}-1}{q-1}$.
For any subspace $U\in {\rm SUB}^{s}(C)$, there exists a subspace $V\in {\rm SUB}^{s}( \mathbb{F}_q^k)$ such that $U=\{ \mathbf{y}G\,|\,  \mathbf{y}\in V\}.$
By Lemma~\ref{weight}, we know that the support weight of $U$ is
\begin{equation}\label{uu31}
 w(U)= n-m_{G}( V^{\bot}),
\end{equation}
where $\dim(V^{\bot})=k-s.$

By the definition of the function $m_{G}$, we have that
\begin{equation}\label{oor31}
  m_{S_{q,k}}(V^{\bot})=m_{G}(V^{\bot})+\sum_{i=1}^{t} m_{G_i}(V^{\bot})  .
\end{equation}

By the definition of $S_{q,k}$,
we know that $m_{S_{q,k}}(V^{\bot})=\frac{q^{k-s}-1}{q-1} .$
By Equalities~(\ref{uu31}) and (\ref{oor31}), we have that
\begin{equation}\label{oo31}
  w(U)=\frac{q^{k}-q^{k-s}}{q-1}-\sum_{i=1}^{t}\frac{q^{u_i}-1}{q-1}+ \sum_{i=1}^{t}m_{G_i}(V^{\bot}).
\end{equation}
Since $u_i\leq k-s-1,$ we have that $$\max\{0, u_i-s-1\}\leq \dim(U_i \bigcap V^{\bot})\leq u_i$$  and
$$\max\{0,  \frac{q^{u_i-s-1}-1}{q-1}\}\leq m_{G_i}(V^{\bot})\leq  \frac{q^{u_i}-1}{q-1} .$$
Then we have that  $$d_{s+1}(C)\ge \frac{q^{k}-q^{k-s-1}}{q-1}-\sum_{i=1}^{t}\frac{q^{u_i}-1}{q-1}$$
and  $$D_{s}(C)\leq \frac{q^{k}-q^{k-s}}{q-1}.$$
Since  $t\leq q-1$ and $u_i\leq k-s-1,$
we have that $  (q-1)q^{k-s-1}\ge  tq^{k-s-1}\ge \sum_{i=1}^{t}q^{u_i} $
and  $$d_{s+1}(C)-D_{s}(C)\ge \frac{q^{k-s}-q^{k-s-1}}{q-1}-\sum_{i=1}^{t}\frac{q^{u_i}-1}{q-1}=\frac{(q-1)q^{k-s-1}-\sum_{i=1}^{t}q^{u_i}+t}{q-1}>0,$$

By  Theorem~\ref{T2}, we know that  $\tilde{C}$ is a $s$-minimal code.

 When $t=1$ and $1\leq u_1\leq k-s-1$, the weight distribution of the linear code $C$ is completely determined in Theorem 1 of \cite{PC}.
When $t=2$ and $1\leq u_1<u_2 \leq k-s-1$, the weight distribution of the linear code $C$ is completely determined in Theorem 5 of \cite{SL}.
\end{proof}

\begin{Remark}
When $t=1$ and $1\leq u_1\leq k-s-1$, the $s$-subcode support weight distributions of the linear code $C$ in Theorem~\ref{T31} is completely determined in Theorem 1 of \cite{PC}.
When $t=2$ and $1\leq u_1<u_2 \leq k-s-1$,  the $s$-subcode support weight distributions of the linear code $C$ Theorem~\ref{T31} is completely determined in Theorem 5 of \cite{SL}.
\end{Remark}

\begin{Example}\label{ee12pa}
Assume $q=u_1=2,$ $u_2=3$ and $k=5$.
Let
$$ G_1=\left(\begin{array}{ccc }
      1&0&1\\
      0&1&1\\
      0&0&0\\
      0&0&0\\
      0&0&0
\end{array}\right) \text{ and } G_2=\left(\begin{array}{cc ccccc}
      0&0&0&0&0 &0&0\\
      0&0&0&0&0 &0&0\\
      1&0&1&0&1&0&1\\
      0&1&1&0&0&1&1 \\
      0&0&0&1&1&1&1
\end{array}\right) .$$
Let $C_1$ be the $[28,5,14]_2$-linear code with the generator matrix $[S_{2,5}\backslash G_1]$,
and let $C_2$ be the $[24,5,11]_2$-linear code with the generator matrix $[S_{2,5}\backslash  G_2].$
The linear codes $C_1$ and $C_2$ are Griesmer codes.
By Magma, the  generalized Hamming weights
and the maximum Hamming weights of $C_1$ and $C_2$ are listed in Table~\ref{tra}.
\begin{table}[htbp]

{ \small
\caption{Parameters of  $C_1$ and $C_2$ in Example~\ref{ee12pa}}
\label{tra}
\center

\begin{tabular}{c |c|c|c|c|c|c}
&$d_{s+1}(C_1)$&$D_s(C_1)$  &$s$-minimal&$d_{s+1}(C_2)$&$D_s(C_2)$ &$s$-minimal \\ \hline
$s=1$&21 & 16&yes  &17&16       &yes                         \\ \hline
$s=2$& 25&24& yes  &21& 23&no                     \\ \hline
$s=3$&27 &28& no&23&           24 &no        \\ \hline
$s=4$& 28&28& no&24&          24    &no      \\ \hline
\end{tabular}

}
\end{table}
\end{Example}

\begin{Example}\label{ee12p}
Assume $u_1+1=u_2=2,$ $q=3$ and $k=5$.
Let
$$ G_1=\left(\begin{array}{c }
      1\\
      0\\
      0\\
      0\\
      0
\end{array}\right) \text{ and } G_2=\left(\begin{array}{cc cc}
      0&0&0&0\\
      0&1&1&1 \\
      1&0&1&2\\
       0&0&0&0\\
      0&0&0&0
\end{array}\right) .$$
Let $C_1$ be the $[117,5,78]_3$-linear code with the generator matrix $[S_{3,5}\backslash G_2]$,
and let $C_2$ be the $[116,5,77]_3$-linear code with the generator matrix $\big{[}S_{3,5}\backslash   [G_1,  G_2]\big{]}.$
The linear codes $C_1$ and $C_2$ are Griesmer codes.
By Magma, the  generalized Hamming weights
and the maximum Hamming weights of $C_1$ and $C_2$ are listed in Table~\ref{tr}.
\begin{table}[htbp]

{ \small
\caption{Parameters of  $C_1$ and $C_2$ in Example~\ref{ee12p}}
\label{tr}
\center

\begin{tabular}{c |c|c|c|c|c|c}
&$d_{s+1}(C_1)$&$D_s(C_1)$  &$s$-minimal&$d_{s+1}(C_2)$&$D_s(C_2)$ &$s$-minimal \\ \hline
$s=1$&104 & 81&yes  &103&81       &yes                         \\ \hline
$s=2$& 113&108& yes  &112& 108&yes                   \\ \hline
$s=3$&116 &117& no& 115&  116          &no        \\ \hline
$s=4$& 117&117& no& 116&   116           &no      \\ \hline
\end{tabular}

}
\end{table}

\end{Example}

\section{$s$-Minimal codes violating the generalized Ashkhmin-Barg condition}

In this section, we give constructions of $s$-minimal codes that violate the generalized Ashikhmin-Barg condition, and generalizing the constructions for minimal codes violating the Ashikhmin-Barg condition in \cite{Chen}.

\begin{Theorem}\label{t-7.1}
    Let $C$ be an $[n,k]_q$-linear  code with the $s$-generalized Hamming weight $d_{s}(C)$ and the $s$-maximum Hamming weight $D_{s}(C)$.
    Suppose $C$ satisfies the conditions of Theorem \ref{T2},
    $$\frac{d_{s}(C)}{D_{s}(C)} >\frac{q^{s+1}-q}{q^{s+1}-1},$$
    thus $C$ is a $s$-minimal code.
    Set $$n'=\left\lceil \frac{(q^{s+1}-1)d_{s}(C)}{q^{s+1}-q} \right\rceil-D_{1}(C).$$
    Then we construct an explicit $q$-ary $s$-minimal code $C' \subset \mathbb{F}_q^{n+n'}$ with the $s$-maximum Hamming weight at least $D_{1}(C)+n'$ and the $s$-generalized Hamming weight $d_{s}(C)$ satisfying
    $$\frac{d_{s}(C')}{D_{s}(C')} \leq \frac{q^{s+1}-q}{q^{s+1}-1}.$$
\end{Theorem}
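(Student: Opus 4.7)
The plan is to extend $C$ by appending $n'$ identical copies of a single carefully chosen column, so that the resulting code $C'$ preserves $d_s$ (giving us $d_s(C') = d_s(C)$) while inflating the $s$-maximum support weight by $n'$ via a distinguished codeword. First I would fix a codeword $\mathbf{c}_0 = \mathbf{m}_0 G \in C$ of weight $D_1(C)$, and fix an $s$-dimensional subspace $V_0 \subset \mathbb{F}_q^k$ attaining $w(U_{V_0}) = d_s(C)$, chosen so that $\mathbf{m}_0 \notin V_0$. Since $\mathbf{m}_0 \notin V_0$ translates to $V_0^\perp \not\subset \mathbf{m}_0^\perp$, I can then pick a vector $H \in V_0^\perp \setminus \mathbf{m}_0^\perp$. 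Define $C'$ by the $k \times (n+n')$ generator matrix
\[
G' = [\,G,\; \underbrace{H^T,\; H^T,\; \cdots,\; H^T}_{n'\text{ copies}}\,].
\]

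Next I would prove that $C'$ is $s$-minimal by invoking the cutting $s$-blocking set characterization. By Theorem~\ref{yyy}, $s$-minimality is preserved under projection, and by Theorem~\ref{yy} the projective image of the columns of $G$ is a cutting $s$-blocking set $B$ of $\mathrm{PG}(k-1,q)$. The projective image of the columns of $G'$ is either $B$ (if $[H]\in B$) or $B \cup \{[H]\}$; in both cases it contains the cutting $s$-blocking set $B$, and since any superset of a cutting $s$-blocking set is again a cutting $s$-blocking set, the projection of $C'$ is $s$-minimal, whence $C'$ itself is $s$-minimal.

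The weight analysis follows from Lemma~\ref{weight}. For any $s$-dimensional $V \subset \mathbb{F}_q^k$, the support of $U'_V = \{\mathbf{y}G' : \mathbf{y}\in V\}$ decomposes as $\mathrm{supp}(U_V) \cup S_V$, where $S_V$ is the set of appended coordinates, which is either empty (if $V \subset H^\perp$) or the full set of $n'$ new coordinates (otherwise). Hence $w(U'_V) = w(U_V) + n'\cdot \mathbf{1}[V \not\subset H^\perp]$. Applying this to $V_0$, which satisfies $V_0 \subset H^\perp$ by construction, gives $w(U'_{V_0}) = w(U_{V_0}) = d_s(C)$; since $d_s(C') \ge d_s(C)$ holds trivially by restricting to the first $n$ coordinates, this yields $d_s(C') = d_s(C)$. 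Applying it to any $s$-dimensional $V \ni \mathbf{m}_0$ (which satisfies $V \not\subset H^\perp$ because $\langle \mathbf{m}_0, H\rangle \ne 0$) gives $w(U'_V) \ge w(\mathbf{c}_0) + n' = D_1(C) + n'$, so $D_s(C') \ge D_1(C) + n'$. Combining these with the definition of $n'$ produces
\[
\frac{d_s(C')}{D_s(C')} \;\le\; \frac{d_s(C)}{D_1(C)+n'} \;=\; \frac{d_s(C)}{\lceil (q^{s+1}-1)d_s(C)/(q^{s+1}-q)\rceil} \;\le\; \frac{q^{s+1}-q}{q^{s+1}-1}.
\]

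The main obstacle is the genericity step: producing an $s$-dimensional minimizer $V_0$ of $d_s(C)$ with $\mathbf{m}_0 \notin V_0$, since otherwise every candidate $H \in V_0^\perp$ would also lie in $\mathbf{m}_0^\perp$ and the crucial weight gain $w(\mathbf{c}_0') = D_1(C)+n'$ would fail. If every $s$-dim minimizer contained $\mathbf{m}_0$, then $\mathbf{c}_0 \in U_{V_0}$ would force $D_1(C) \le d_s(C)$; in that regime one can still proceed when $s\ge 2$ by taking $V = \langle \mathbf{m}_0\rangle + \langle \mathbf{m}_1\rangle + \cdots$ with some $\mathbf{m}_1 \notin H^\perp$ to ensure $V \not\subset H^\perp$, so that $\mathbf{c}_0 \in U'_V$ still picks up the $n'$ new coordinates. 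Handling this reduction carefully (and the borderline $s=1$ constant-weight situation) is the technical content that needs to be verified, but everything else is a straightforward consequence of Lemma~\ref{weight} and the cutting $s$-blocking set machinery established earlier.
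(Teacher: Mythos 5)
Your construction is in essence the same one the paper uses: both proofs adjoin $n'$ mutually proportional columns (a rank-one extension) so that a fixed maximum-weight codeword gains $n'$ coordinates while some $d_s$-attaining subspace gains nothing; the paper realizes this by prepending $n'$ columns supported only on the row $\mathbf{r}_1$ of maximum weight, which in your notation is the choice $H=\mathbf{e}_1$. Where you genuinely diverge is the $s$-minimality argument: the paper runs a direct two-case analysis on pairs $U_2,U_1$ with nested supports, whereas you observe that the column point set of $G'$ contains the column point set of $G$, which is a cutting $s$-blocking set by Theorems \ref{yy} and \ref{yyy}, and that supersets of cutting $s$-blocking sets are cutting. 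This is cleaner and strictly more general (it shows that appending \emph{any} columns to a generator matrix of an $s$-minimal code preserves $s$-minimality), and I consider it an improvement over the paper's case analysis.

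The gap you flag is real, and you should be aware that you have not closed it. Your construction needs $H\in V_0^{\perp}\setminus\mathbf{m}_0^{\perp}$, i.e.\ a $d_s$-minimizer $V_0$ with $\mathbf{m}_0\notin V_0$, and such a minimizer need not exist (e.g.\ when $D_1(C)\leq d_s(C)$ and the unique minimizer contains $\mathbf{m}_0$). However, the workaround you sketch actually suffices and you should promote it to the main argument: drop the requirement $H\notin\mathbf{m}_0^{\perp}$ entirely, take \emph{any} nonzero $H\in V_0^{\perp}$ (which exists since $\dim V_0^{\perp}=k-s\geq 1$), keep $d_s(C')=d_s(C)$ via $V_0\subseteq H^{\perp}$, and obtain the boost from any $s$-dimensional $W$ with $\mathbf{m}_0\in W$ and $W\not\subseteq H^{\perp}$; for $s\geq 2$ such $W$ exists because $H^{\perp}$ is a proper subspace, and for $s=1$ one checks that some maximum-weight message lies outside $H^{\perp}$ for a suitable choice of minimizer $V_0$ and $H$ (if every maximum-weight message lay in every such $H^{\perp}$, all maximum-weight messages would lie in $\bigcap V_0$ over all one-dimensional minimizers, which forces $k=1$). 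Once this is written out, your proof is complete. It is worth noting that the paper's own proof has the mirror-image of your gap: it fixes the columns to boost $\mathbf{r}_1$ and then asserts without justification that ``the $s$-generalized Hamming weight of $C'$ remains $d_s(C)$,'' which likewise requires a minimizer avoiding the chosen direction; your version, suitably completed, is the more careful of the two.
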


\begin{proof}
    Let $G$ be a generator matrix of $C$, where the first row ${\bf r}_1$ is the codeword $(c_1,\ldots, c_n)$ with maximum weight $D_{1}(C)$, and the second row ${\bf r}_2$ is a codeword with the minimum weight $d_{1}(C)$.
    Let $G'$ be a $k \times (n+n')$ matrix with the first row ${\bf r}_1'=(c_1',\ldots, c_n',c_1, \ldots, c_n)$, where $c_1', \ldots, c_n'$ are nonzero elements in $\mathbb{F}_q$, the remaining rows are of the form ${\bf r}_i'=({\bf 0}, {\bf r}_i)$, where ${\bf 0}$ denotes the zero vector in $\mathbb{F}_q^{n'}$ and ${\bf r}_i$ is the $i$-th row of $G$, for $i=2,\ldots,k$.

    It is clear that the rank of the matrix $G'$ remains $k$.
    Denote by $C'$ the linear code in $\mathbb{F}_q^{n+n'}$ generated by $G'$.
    According to the defining of $C'$, the $s$-generalized Hamming weight of $C'$ remains $d_{s}(C)$.
    The $s$-maximum Hamming weight of $C'$ is at least $n'+D_{1}(C)$, since the Hamming weight of the first row of $G'$ is $n'+D_{1}(C)$.
    We now prove that $C'$ is a minimal code.
    Let $U_1, U_2\in {\rm SUB}^{s}(C)$ satisfying
    $$\mathrm{supp}(U_2) \subseteq \mathrm{supp}(U_1).$$
    We analysis the following two cases.

    {\em Case 1}. $U_1$ is the subspace of $C'$ formed by some linear combinations of rows ${\bf r}_2',\ldots, {\bf r}_k'$ of $G'$.
    In this case, the support of $U_1$ is in the last $n$ coordinate positions.
    Then the support of $U_2$ is also in the last $n$ coordinate positions, and $U_2$ is some linear combinations of ${\bf r}_2',\ldots, {\bf r}_k'$ rows of $G'$.
    Hence, from the minimality of the linear code $C$, we have $U_2=U_1$.

    {\em Case 2}. $U_1$ is the subspace of $C'$ formed by some linear combinations of rows ${\bf r}_1', \ldots, {\bf r}_k'$ of $G'$.
    Suppose that
    $${\bf x}_1=\mu_1{\bf r}_1'+\mu_2 {\bf c}\in U_1,$$
    where $\mu_1\neq 0$, and ${\bf c}$ is a linear combination of rows ${\bf r}_2',\ldots, {\bf r}_k'$ in $G'$.
    Let $U_i'$ be the subspace in $\mathbb{F}_q^n$ obtained by puncturing the first $n'$ coordinates of all vectors in $U_i$, $i=1,2$.
    They are two subspaces of code $C$, and we have $\mathrm{supp}(U_2') \subseteq  \mathrm{supp}(U_1')$.
    Then $U_2'=U_1'$ according to the minimality of $C$.
    Therefore, $U_2$ cannot be a subspace formed by some linear combinations of only ${\bf r}_2', \ldots, {\bf r}_k'$ in the generator matrix $G'$.
    Then the support of $U_2$ has to include the first $n'$ coordinate positions.
    From a similar argument as above, we also conclude that $U_2=U_1$.
    Then the linear code $C'$ is $s$-minimal.

    The $s$-maximum Hamming weight of $C'$ is at least
    $$D_s(C') \geq D_1(C)+n'\ = \left\lceil \frac{(q^{s+1}-1)d_{s}(C)}{q^{s+1}-q} \right\rceil.$$
    It is easy to verify that
    $$\frac{d_s(C')}{D_s(C')} \leq \frac{d_s(C)}{D_1(C)+n'} \leq \frac{q^{s+1}-q}{q^{s+1}-1}.$$
    The $s$-minimal code $C'$ violates the generalized Ashkhmin-Barg condition.
    The conclusion is proved.
\end{proof}

Next, we provide two simple examples. The minimality of these codes has also been verified by Magma.

\begin{Example}\label{e-1-xie}
    Let $C$ be the $2$-minimal linear $[28,5,14]_2$ code with $D_1(C)=16$ and $d_2(C)=21$ constructed in Example \ref{ee12pa}.
    Let  $n'=\left\lceil \frac{(2^{3}-1)\times21}{2^{3}-2} \right\rceil-16=9$. Then we can construct a $2$-minimal code $C'$ violating the generalized Ashkhmin-Barg condition, with parameters $[37, 5, 14]_2$, $d_2(C') = 21$ and $D_2(C')=33$, since $\frac{d_2(C')}{D_2(C')}=\frac{21}{33}<\frac{6}{7}$.
\end{Example}

\begin{Example}\label{e-2-xie}
    Let $q=2$, $k=5$, and
    $$ G_1=\left(\begin{array}{ccc }
      1&0&1\\
      0&1&1\\
      0&0&0\\
      0&0&0\\
      0&0&0
\end{array}\right), G_2=\left(\begin{array}{cc ccccc}
      0\\
      0\\
      1\\
      0\\
      0
\end{array}\right), \text{ and } G_3=\left(\begin{array}{cc ccccc}
      0\\
      0\\
      0\\
      1\\
      0
\end{array}\right)$$
    Let $C$ be the linear $[26,5,12]_2$ code with generator matrix $[S_{2,5}\backslash G_1\backslash G_2 \backslash G_3]$.
    Using magma, we obtain that $C$ is a $2$-minimal code with $D_1(C)=16$ $d_2(C)=19$, and $D_2(C)=23$.
    Let $n'=\left\lceil \frac{(2^{3}-1)\times 19}{2^{3}-2} \right\rceil-16=7$. Then we can construct a $2$-minimal code $C'$ violating the generalized Ashkhmin-Barg condition, with parameters $[33, 5, 12]_2$, $d_2(C') = 19$ and $D_2(C')=30$, since $\frac{d_2(C')}{D_2(C')}=\frac{19}{30}<\frac{6}{7}$.
\end{Example}

\section*{Acknowledgement}
In this work, 
Hao Chen was supported by National Natural Science Foundation of China (Grant No. 62032009).
Xu Pan was supported by National Natural Science Foundation of China (Grant No. 12401689).


\end{document}